\newtheorem{thm}{Theorem}
\newtheorem{prop}[thm]{Proposition}
\newtheorem{defn}[thm]{Definition}
\newtheorem{lem}[thm]{Lemma}
\newtheorem{clm}[thm]{Claim}
\newtheorem{lemma}{Lemma}
\newcommand{\tuple}[1]{\mbox{$\langle #1 \rangle$}}
\renewcommand{\implies}{\Rightarrow}
\newcommand{\mand}{\wedge}
\newcommand{\defines}{\equiv}
\newcommand{\suchthat}{\mbox{$\cdot$}}
\newcommand{\trace}{\mbox{$\cal T$}}
\newcommand{\features}{\mbox{${\cal F}$}}
\newcommand{\components}{\mbox{${\cal C}$}}
\newcommand{\spec}{\overline{\features}}
\newcommand{\arch}{\overline{\components}}
\newcommand{\powerset}{\wp}
\newcommand{\scope}{\mbox{$\overline{\features}$}}
\newcommand{\platform}{\mbox{$\overline{\components}$}}
\newcommand{\spl}{\mbox{$\Psi$}}
\newcounter{cstcount}
\begin{document}
%
\title{Formalizing Traceability and Derivability in Software Product Lines}


\author{\IEEEauthorblockN{Swarup Mohalik, Ramesh S., Jean-vivien Millo}
\IEEEauthorblockA{India Science Lab\\
General Motors, TCI\\
Bangalore, India\\
Email: \{swarup.mohalik,ramesh.s,jean.v\}@gm.com}
\and
\IEEEauthorblockN{Shankara Narayanan Krishna, Ganesh Narwane}
\IEEEauthorblockA{Dept. of Computer Science and Engineering\\
I.I.T., Powai\\
Mumbai, India\\
\{krishnas,ganeshk\}@cse.iitb.ac.in}
}

\maketitle

\begin{abstract}
In the literature, the definition of product in a Software Product Line (SPL) is 
based upon the notion of consistency of the constraints, imposed by variability and traceability relations on the elements of the SPL. In this paper, we contend that consistency does not model the natural semantics of the implementability relation  between problem and solution spaces correctly. Therefore, we define when a feature can be {\em derived} from a set of components . Using this, we define a product of the SPL by a $\langle$specification, architecture$\rangle$ pair, where all the features in the specification are derived from the components in the architecture. This notion of derivability is formulated in  a simple yet expressive, abstract model of a productline with traceability relation. We then define a set of SPL analysis problems  and show that these problems can be encoded as Quantified Boolean Formulas. Then, QSAT solvers like QUBE can be used to solve the analysis problems. We illustrate the methodology on a small fragment of a realistic productline.
\end{abstract}

\begin{IEEEkeywords}
Software Product Line; Sanity analysis; Formal methods; QSAT
\end{IEEEkeywords}

%
\IEEEpeerreviewmaketitle

\section{Introduction}
\label{intro}

Software Product Line (SPL) is a development framework to jointly design a family of closely related software {\em products} in an efficient and cost-effective manner. Every SPL is built upon a collection of features and components. Each individual product is {\em specified} by a subset of features

Each product in the family is {\em specified} by a set of features drawn from a collection common to the family, and is {\em implemented} by an architecture comprising a set of reusable components selected from a collection of {\em basic assets} which are developed once for the entire family. 

There are two key orthogonal aspects of an SPL, namely, {\em variability} and {\em traceability}. While variability introduces different choices (termed {\em variation points}) within the artifacts in system development, such as specifications, architectures and components, traceability relates the variation points together across the artifacts. Since variability introduces complex constraints among the variation points, managing variability in large industrial SPLs is quite complex and has given rise to a number of analysis problems. These have been the focus of SPL research in the recent years.  A comprehensive survey of these analysis problems and their solutions can be found in Benavides et al.\cite{benavides10-is}. 

On the other hand, we observe that traceability and its implications have not been studied in as much depth in the literature. In the following, we mention the few works addressing traceability as a primary aspect. It is defined in \cite{Beuche2004} as one of the  four important characteristics of a variability model, namely, consistency, visualization, scalability and traceability.  A variability management model that focuses on the traceability aspect between the notion of problem and solution spaces is presented in \cite{Berg2005}. Anquetil et al.\cite{Anquetil2008} formalize the traceability relations across problem and solution space and also across domain and product engineering. In \cite{302409}, the notion of {\em product maps} is defined which is a matrix giving the relation between features and products.  Consistency analysis of product maps is presented in \cite{Eisenbarth2002}. Zhu et al.\cite{Zhu2006} define a traceability relation from requirement to feature and also from feature to architecture with consistency analysis. \cite{10.1109/ICCSA.2007.59} presents a consistency verification method between feature model and architecture model. Metzger et al.\cite{Metzger2007} differentiate SPL variability and product variability and then present a framework based on OVM by Pohl et al.\cite{1095605} to perform checks for consistency, liveness, commonness, realizability (completeness), and flexibility (soundness).

One of the central concepts of the SPL analyses in the above-mentioned works is that of a {\em product}. It is defined through the notion of consistency between a collection of features and components and the constraints imposed by variability and traceability. In this report, we contend that consistency does not model the natural semantics of the implementability relation  between problem and solution spaces correctly. It allows components and features to coexist without any conflict, but it also allows cases where the features may not be derivable from the components. Hence, the SPLs can be shown to allow products where the components are not related to the features in a more intuitive notion of traceability. Therefore, we define when a feature can be {\em derived} from a set of components. Using this, we define a product of the SPL by a $\langle$specification, architecture$\rangle$ pair, where all the features in the specification are derived from the components in the architecture. This definition of products is tighter than the existing "consistency" based definitions. 

Another contribution of the report is a simple yet expressive, abstract model of a productline where we formally define the derivability notion through the traceability relation. We then define a set of SPL analysis problems. Some of these problems are already addressed in earlier works but are redefined in the light of the new concepts. The others are new and arose because of the separation of problem and solution space linked through traceability. We show that these problems, in general, can be encoded as Quantified Boolean Formulas(QBF) and QSAT solvers\cite{qsat-solver} can be used to solve the problems. We illustrate the methodology on a small fragment of a realistic productline.

The summary of our contributions in this report are the following:
\begin{enumerate}
    \item A new definition for SPL products based on a notion of derivability of feature specifications from component architectures. The traceability relation plays the central role in this definition.
	 \item A simple, abstract semantic model of SPL with traceability. The model abstracts out the details from the existing descriptions of SPL in the literature and allows us to define the core concepts in a formal and concise manner.
	\item A set of analysis problems in the SPL, some of which are known but cast anew in the light of the new definitions, and others that are novel.
	\item A solution method for the analysis problems which is based on QBF encoding and QSAT solving. This is necessitated by the nature of some of the analysis problems and is in contrast to the SAT based solving methods generally employed for the extant SPL analyses.

\end{enumerate}
\paragraph{Outline of The Report}
In the following section, we introduce a case study of Entry Control Product Line (ECPL) from the automotive domain. This is used as a running example throughout the rest of the report. The formal model of an SPL with traceability is described in Section \ref{sec_formal}. It introduces the central notion of derivability and the analyses we would like to carry out in SPL. In Section \ref{sec_encoding}, we show how the analysis problems can be encoded in QBF. The results of the analyses using QSAT on the ECPL case study is presented in Section \ref{sec_casestudy1}. Finally, we conclude in Section \ref{sec_con} with a summary of the report and some future directions. The proof of the main result relating the analysis problems and QBF formulae is given in the appendix.

\section{The Entry Control Product Line (ECPL)}
\label{sec_casestudy}

We introduce a fragment of a typical Entry Control Product Line (ECPL) used in the automotive industry. It will be used to illustrate the concepts throughout the report and as a case study in Section~\ref{sec_casestudy1}. The entry control system comprises all the features involved in the controlling of door locking/unlocking in a car. In this study, we focus on the following subset:
\begin{itemize}
\item Manual lock: controls the locking/unlocking through manual lever presses
\item Power lock: controls the locking/unlocking according to key button press, courtesy switch press and sill button press. 
\item Door lock: controls automatic locking of doors when the vehicle starts.
\item Door relock: controls automatic relocking of doors in case of pick up/drop and drive.
\end{itemize}

\paragraph{The ECPL feature diagram}

Figure \ref{fig_fm} presents the feature diagram of the ECPL (a la Czarnecki~\cite{CzarneckiHE05}). The dark gray boxes are features of the ECPL. 
The light gray boxes are parameters modeled as features. The Power lock feature is mandatory. Manual lock is optional. When it is present, the Power lock feature is excluded. The Door lock feature is optional and can be triggered either when gear is shifted out of park or when car speed reaches a predefined value. The Door relock feature is optional. The car should have either a manual or an automatic transmission. Manual transmission disallows the ``park options" of Door lock since there is no park gear in a manual gearbox.

\begin{figure}[hbpt]
 \begin{center}
   {\includegraphics[scale=0.35]{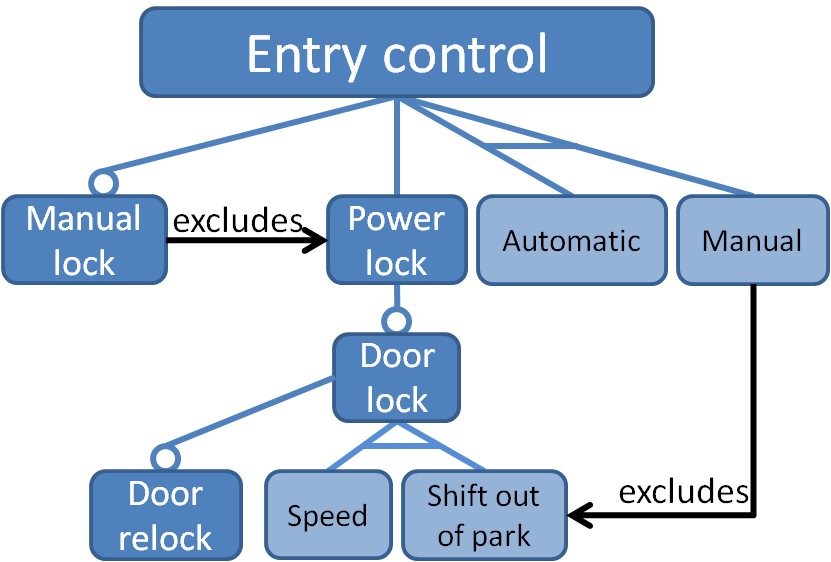}}
  \caption{The feature diagram of the ECPL.} 
  \label{fig_fm}
  \end{center}
\end{figure}

\paragraph{The ECPL architectural diagram}

Figure \ref{fig_mam} represents the platform of ECPL using a notation called Modal Architectural Model (abbreviated as MAM). 
It is a simplified form of EASEL by \cite{Hoek2000} and yet preserves the essential notion of variability central to the product line. The platform is composed of three components: $Door$ $lock$ $manager$, $Power$ $lock$, and $Auto$ $lock$. The first is mandatory but the two others are optional (denoted by dotted boxes). The system has seven ``in" ports (dark squares) and three ``out" ports  (light squares). The interconnections between external and internal ports connect ports of the same type but internal interconnection connect complementary ports (``out" port to ``in" port). The signals ``Transmission in Park" and ``Speed" are alternatives. Similarly ``Automatic" and ``Manual" inform the system on the type of transmission.

\begin{figure}[hbpt]
 \begin{center}
   {\includegraphics[scale=0.4]{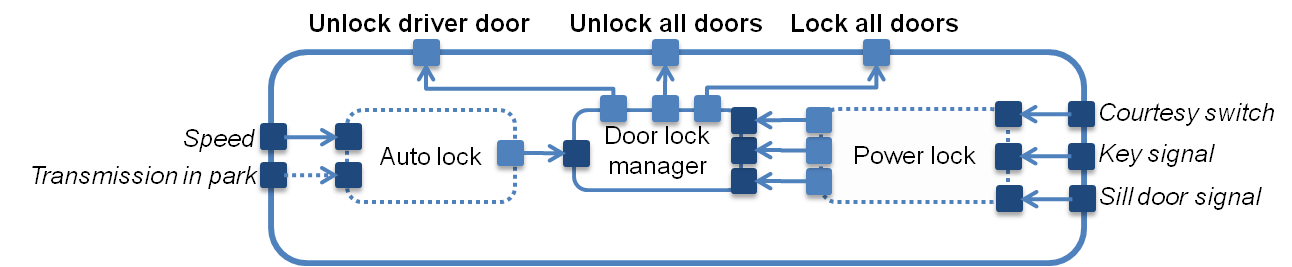}}
  \caption{The platform of the ECPL.} 
  \label{fig_mam}
  \end{center}
\end{figure}

$Auto$ $lock$ component requires two global input signals while $Power$ $lock$ component requires five. They provide lock/unlock command signals to $Door$ $lock$ $manager$. The command provided by $Power$ $lock$ component depends upon manual action, and the command provided by $Auto$ $lock$ component is according to the requirements of the features Door lock and Door relock.

The $Door$ $lock$ $manager$ component arbitrates the lock/unlock command signals from $Auto$ $lock$ and $Power$ $lock$ and forwards them to the global outputs depending upon a calibration (1/Unlock all doors, 2/Unlock Driver door, 3/Lock all doors).

\paragraph{The traceability relations of the ECPL}
To avoid confusion between the homonymous features and components ({\em Automatic}, {\em Manual}, and {\em Speed}), we will, in the sequel, prefix the labels with $f\_$ or $c\_$ respectively. Table \ref{tab_req} presents the required components to implement each feature. 

\begin{table}[htbp]
	\centering
	\begin{tabular}{|c|c|c|}
	\hline
	Feature & Component \\
	\hline
	\hline
	Power lock& Door lock manager\& Power lock\\
	Door lock& Auto lock\\
	Door relock& Auto lock\\
	f\_Automatic& c\_Automatic\\
	f\_Manual& c\_Manual\\
	Shift out of Park& Gear in Park\\
	f\_Speed& c\_Speed\\
	\hline
	\end{tabular}
	\caption{Each feature requires component(s)}
	\label{tab_req}
\end{table}

Table \ref{tab_prov} presents the features provided by the architectural elements.

\begin{table}[htbp]
	\centering
	\begin{tabular}{|c|c|c|}
	\hline
	Component/Interconnection & Feature\\
	\hline
	\hline
	 Door lock manager \& Power lock& Power lock \\
	 c\_Automatic& f\_Automatic\\
	 c\_Manual& f\_Manual\\
	 Auto lock & Door lock \& Door relock \\
	 Gear in park  & Shift out of park \\
	 \hline
	 c\_Speed& f\_Speed\\
	 \hline	
	\end{tabular}
	\caption{The architectural elements provide some features}
	\label{tab_prov}
\end{table}

\section{Model of SPL : Traceability and Implementation}
\label{sec_formal}
In this section, we propose a model of the software productline 
making the traceability relation explicit and define an implementation
relation between architectures and specifications based on traceability.

\subsection{Modeling Decisions}  
In ~\cite{Metzger2007}, the traceability relation is given as a set of arbitrary propositional constraints over the components and features. In the current report, we impose a fairly natural structure on the traceability relation, consisting of a {\em provides} and a {\em requires} function for each feature. 
This is inspired by the points of view of the suppliers and integrators (OEMs). 
Suppliers usually would package one or more features in a component, which is captured by the provides relation. On the other hand, integrators start with a set of features which requires a set of components for implementation.

Importantly, the implementations are related to the specifications only when they can be derived using the traceability relation. Consider a simple SPL consisting of a feature $f$ and a component $c$, but without any traceability relation between $f$ and $c$. According to analyses such as in \cite{Metzger2007}, since $\{f, c\}$ is consistent (in a propositional logic), it is considered as a product. Clearly, it is not natural. On the other hand, if $f$ was provided by $c$, then $\{f, c\}$ would be a natural product.

Another novel point in our model is the notion of approximate implementation (Covers). In the literature, the definition of implementation is usually exact: we need the components that provide exactly the same set of features in a specification. However, since many components are pre-built by the suppliers, there may not be a choice suitable for an exact implementation. For example, if the OEM wants a feature of ABS (Anti-lock Braking) and the supplier has packaged both ABS and TC(Traction Control) in one component, the OEM has to choose this component which covers (but does not exactly implement) the specification of ABS. 

\subsection{Formal Model}
Let $\features$ be a set of features. A subset of $\features$ is called a {\em specification}. The scope of an SPL is a collection of specifications: $\spec \subseteq \powerset(\features)$. The specifications are implemented using a set of (reusable) components $\components$. Each subset of $\components$ is called an {\em architecture}. An SPL platform consists of a set of architectures: $\platform \subseteq \powerset(\components)$.\footnote{The representation of specification and platform is semantic in nature. Syntactic representation of these may use FODA diagrams, MaMs or a variety of notations in the literature. In general, one can have implicit representations through constraints on the features and components; we will adopt this view in the following sections.}

A traceability relation $\trace$ connects the features and components: $\trace$ is specified as a pair $\tuple{prov, req}$ where $prov$ and $req$ are maps $\features \rightarrow \powerset(\powerset(\components))$. Through the traceability relation we capture the {\em sufficient} ($prov(.)$) and {\em necessary} ($req(.)$) conditions to implement a feature. When $prov(f) = \{C_1, C_2\}$, we interpret it as the fact that the set of components $C_1$ (also, $C_2$) provides the implementation of the feature $f$. On the other hand, when $req(f) = \{D_1, D_2\}$, we interpret as the fact that the implementation of the feature $f$ requires  the set of components $D_1$ or the set of components $D_2$. 

\begin{defn}
An SPL $\spl$ is defined as a triple $\tuple{\scope, \platform, \trace}$, where $\scope$ is the scope, $\platform$ is the platform and $\trace$ is the traceability relation.
\end{defn}

In the ECPL case study, $\features$ contains the nine features of Figure \ref{fig_fm} and the ECPL scope $\scope$ contains eight specifications. For illustration, we choose the following specifications: $spec_{1}=\{Power\,lock, f\_Automatic\}$ and $spec_{2}=\{Power\,lock, f\_Automatic, Door\,lock, Shift\,out\,of\,park,$\\ $Door\,relock\}$. The top-most feature {\em Entry control} is in every specification and is not mentioned explicitly.

In ECPL, $\components$ contains the three components of Figure \ref{fig_mam} and the twelve interconnections which are also modeled as components. Note that the mandatory interconnections are in every architecture and are not mentioned explicitly. The ECPL platform $\platform$ contains nine architectures which can be extracted from the ECPL platform. Again, for illustration, we select two architectures $arch_1=\{Door\,lock\,manager\}$ or $arch_2=\{Door\,lock\,manager, Power\,lock,$\\ $c\_Automatic, Auto\,lock, Transmission\,in\,park\}$.

The traceability relation in ECPL is given through the Tables  \ref{tab_req}($req(.)$) and \ref{tab_prov}($prov(.)$). For example, the $Auto\,lock$ component provides the features $Door\,lock$ and $Door\,relock$. Each of these features requires only $Auto\,lock$ component.

\noindent
The main concept of implementability in $\spl$ is defined as follows: a feature is implemented by an architecture (set of components in $\platform$) if the architecture provides the feature and simultaneously fulfills the mandatory requirements of the feature.
\begin{defn}[Implements]
Given an SPL $\spl = \tuple{\scope, \platform,  \trace}$, 
$implements_{\spl}(C, f)$ {\rm if} 
$\exists C_1 \in prov(f), C_2 \in req(f) \suchthat C_2 \subseteq C_1 \subseteq C.$
 
The set of features implemented by an architecture $C$ is defined as $Provided\_by_{\spl}(C) = \{f | implements_{\spl}(C, f)\}$.
\end{defn}

In ECPL, $implements_{\spl}(spec_2, Power\,lock)$ holds but $implements_{\spl}(spec_1, Power\,lock)$ does not hold. Moreover, if one considers $prov$ as given in Table \ref{tab_prov} without the last line, $implements_{\spl}(arch, f\_Speed)$ never holds for any architecture $arch$ because $prov(f\_Speed) = \emptyset$ even if $req(f\_Speed)=\{\{c\_Speed\}\}$. 

With the basic definitions above, we can now define when an architecture exactly implements a specification.
\begin{defn}[Realization]
 Given $C\in \arch$ and $F\in \spec$, $Realizes(C,F)$ if $F = Provided\_by(C)$.
\end{defn}
Due to the required equality, we have the following easy result.
\begin{prop}
An architecture realizes at most one specification in an SPL.
\end{prop}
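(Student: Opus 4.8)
The plan is to exploit the fact that the realization condition is an \emph{equality} with a quantity that is completely determined by the architecture. First I would observe that for any fixed architecture $C \in \arch$, the set $Provided\_by(C) = \{f \mid implements_{\spl}(C,f)\}$ is a single, well-defined subset of $\features$: its membership is decided pointwise by the predicate $implements_{\spl}(C,f)$, which depends only on $C$ (through $prov$, $req$, and the subset relation). Hence $Provided\_by$ is a genuine function of its argument, assigning exactly one set of features to each architecture.

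Given this observation, the proof reduces to a one-line uniqueness argument. Suppose an architecture $C$ realizes two specifications $F_1, F_2 \in \spec$. By the definition of $Realizes$, this means $F_1 = Provided\_by(C)$ and $F_2 = Provided\_by(C)$. Both equalities share the same right-hand side, so by transitivity of equality $F_1 = F_2$. Thus $C$ cannot realize two distinct specifications, which is exactly the claim that it realizes at most one.

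There is no substantive obstacle here; the statement is an immediate consequence of having defined realization as equality to a functionally-determined set rather than as a containment or as the looser approximate relation. The only point worth flagging is precisely that contrast with the covering notion discussed earlier: were realization relaxed to $F \subseteq Provided\_by(C)$ (or the reverse inclusion), uniqueness would fail and a single architecture could serve several specifications. The equality in the definition of $Realizes$ is exactly what forces the ``at most one'' conclusion.
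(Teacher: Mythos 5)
Your proof is correct and matches the paper's reasoning: the paper offers no formal argument beyond the remark ``Due to the required equality, we have the following easy result,'' and your observation that $Provided\_by(C)$ is a function of $C$ so that $F_1 = Provided\_by(C) = F_2$ forces $F_1 = F_2$ is exactly that argument spelled out. Nothing is missing.
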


The $realizes$ definition in the above imposes a strictness on the implementations. Thus, in the ECPL example, the architecture $arch_2$ realizes the specification $spec_2$, but it does not realize $spec_1$ even though it provides the implementation of all the features of $spec_1$. In many cases, this may be a practical definition. Hence, we relax the definition of realization in the following.

\begin{defn}[Covers]
Given $C\in \platform$ and $F\in \scope$, $C$ covers $F$ if $Provided\_by(C) \in \scope \mand F \subseteq Provided\_by(C)$.
\end{defn}
The additional condition ($Provided\_by(C) \in \scope$) is added to ensure that the chosen $C$ provides the implementation of a specification in the scope. In ECPL, $\components_2$ covers $\features_1$ but $\components_1$ does not cover (or even realize) anything.

Given  $F, F' \in \scope$, let $F \subset F'$, Then, $F'$ is called the extension of $F$. The following simple proposition establishes a connection between the relations {\em realizes} and {\em covers}. Figure~\ref{fig:cover-realize-extend} depicts these relations pictorially.

\begin{figure}[hbpt]
 \begin{center}
   {\includegraphics[scale=0.5]{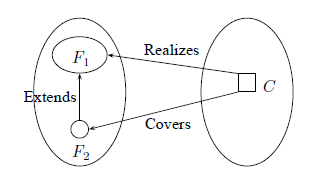}}
  \caption{Specification $F_1$ extends $F_2$, Architecture C realizes $F_1$ and covers $F_2$} 
  \label{fig:cover-realize-extend}
  \end{center}
\end{figure}

\begin{prop}
\label{prop:cover-realize-extension}
Given $C\in \platform$ and $F\in \scope$ and $C$ covers $F$. Then, there is an extension $F'$ of $F$ in $\scope$ such that $Realizes(C,F')$. Hence, if there is no extension of $F$ in $\scope$, then $Realizes(C,F)$.
\end{prop}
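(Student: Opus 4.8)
The plan is to produce the witness directly: I will take $F' = Provided\_by(C)$ and show it has all the required properties. The hypothesis that $C$ covers $F$ unpacks, by the definition of \emph{Covers}, into two facts: first, $Provided\_by(C) \in \scope$, and second, $F \subseteq Provided\_by(C)$. The first fact immediately places my candidate $F'$ in the scope, and the second gives $F \subseteq F'$.

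Next I would verify the realization. By the definition of \emph{Realizes}, the statement $Realizes(C, F')$ means precisely $F' = Provided\_by(C)$, which holds by the very choice of $F'$. Thus $Realizes(C, Provided\_by(C))$ is immediate, with no further argument required.

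The only genuine subtlety --- and it is definitional rather than technical --- is reconciling this witness with the term \emph{extension}, which the paper defines through strict inclusion $F \subset F'$. I would therefore split on whether the inclusion $F \subseteq Provided\_by(C)$ is strict. If $F \subsetneq Provided\_by(C)$, then $F' = Provided\_by(C)$ is a genuine (strict) extension of $F$ lying in $\scope$ and realizing $C$, which establishes the first sentence. If instead $F = Provided\_by(C)$, then $Realizes(C, F)$ holds directly. The ``hence'' clause then falls out of this dichotomy: if $\scope$ contains no strict extension of $F$ at all, then $Provided\_by(C)$ --- which is in $\scope$ and contains $F$ --- cannot be a strict extension, forcing $F = Provided\_by(C)$ and hence $Realizes(C,F)$.

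I do not anticipate any real obstacle, since the proposition is essentially a definitional observation: a covering architecture realizes exactly its own provided set, and that set either equals $F$ or strictly extends it. The main point to get right is simply to invoke the definitions of \emph{Covers}, \emph{Realizes}, and \emph{extension} in the correct order and to keep the strict-versus-nonstrict inclusion bookkeeping clean so that both the existential statement and its ``hence'' consequence are covered.
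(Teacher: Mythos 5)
Your proposal is correct, and it is exactly the argument the paper intends: the paper states this as a ``simple proposition'' and omits the proof entirely, and the evident witness is $F' = Provided\_by(C)$, whose membership in $\scope$ and containment of $F$ are the two conjuncts of the \emph{Covers} definition, and which $C$ realizes by definition of $Provided\_by$. Your case split on whether $F \subseteq Provided\_by(C)$ is strict is a worthwhile addition rather than a digression: since the paper defines ``extension'' via strict inclusion, the first sentence of the proposition is literally vacuous when $F = Provided\_by(C)$ (by the earlier proposition an architecture realizes at most one specification, so no strict extension of $F$ could then be realized by $C$), and your dichotomy is the correct way to read the statement so that both it and the ``hence'' clause hold.
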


In the ECPL case study, $arch_2$ covers $spec_1$, $spec_2$ extends $spec_1$, and $arch_2$ realizes $spec_2$.

The set of products of the SPL are now defined as the specifications and the architectures implementing them through the traceability relation. 
\begin{defn}[SPL Products]
Given an SPL $\spl = \tuple{\scope,\arch, \trace}$, the products of the SPL denoted as $Prod(\spl) \defines \{\tuple{F,C} | Covers(F, C), F\in \scope, C \in \platform \}$
\end{defn}

In the ECPL, out of 8 specifications and 9 architectures, there are 11 products. Even if the architecture $arch_3=\{Door\,lock\,manager, Power\,lock, c\_Manual, Auto\,lock,$ $Transmission\,in\,park\}$ "covers" the specification \linebreak $\{Power\,lock, f\_Manual\}$, this pair is not a product because $Provided\_by(arch_3)$ is not in the scope $\scope$. This is because $arch_3$ provides features $f\_Manual$ and $Shift\,out\,of\,park$ which should be exclusive.

\subsection{SPL Level Properties}
Given an SPL $\tuple{\scope, \platform,  \trace}$, we define two important relationships between the scope (specification space) and platform (architecture, or implementation, space).

\subsubsection{Completeness}
An SPL $\tuple{\scope, \platform,  \trace}$ is complete if $\forall F\in \scope \cdot \exists C \in \platform \cdot Covers(C,F)$. 

The completeness property of the SPL determines if the platform for the SPL is adequate to provide implementation for all the specifications in its scope. 

The ECPL is complete. For illustration's sake, let us omit the last entry in Table \ref{tab_prov}. Then, none of the specifications which include the feature $f\_Speed$ is realizable because $f\_Speed$ cannot be derived from any component.

\subsubsection{Soundness}
An SPL $\tuple{\scope, \platform, \trace}$ is sound if $\forall C\in \platform\cdot \exists F \in \scope\cdot Covers(C,F)$.

The soundness property relates to the non-redundancy of the platform in an SPL. If the architectures (sets of components) are generated using certain rules or constraints, soundness stipulates that only those architectures which provide an implementation of some specification are generated.

The ECPL is not sound because, for example, the architecture $arch_1$ does not realize any specification (feature set). This is the case with all the architecture where $Power\,lock$ is absent. Now, let us assume that the component $Power\,lock$ is mandatory. The ECPL is still not sound because of $arch_3$ only. If $arch_3$ is omitted from the platform, the remaining ECPL become sound.

\subsubsection{Existentially Explicit}
Given an SPL, and a specification $F\in \scope$, it is called an existentially explicit specification in the SPL if there exists a $C \in \platform \suchthat Realizes(C,F)$.

In ECPL, $spec_1$ and $spec_2$ are existentially explicit. However, another specification $spec_3 = \langle Power\,lock,$
 $f\_Automatic,$ $Door\,lock,Shift\,out\,of\,park\rangle$ is not, because none of the architecture realizes a specification with $Door\,lock$ and without $Door\,relock$.

\subsubsection{Universally Explicit}
Given an SPL, and a specification $F\in \scope$, it is called a universally explicit specification in the SPL if (i) there exists a $C \in \platform \suchthat Realizes(C,F)$ and (ii) for all $C \in \platform \suchthat Covers(C,F) \implies Realizes(C,F)$.

In ECPL, $spec_2$ is universally explicit. $spec_1$ is existentially explicit but not universally explicit because it is covered but not realized by the $arch_2$.

It follows from Proposition~\ref{prop:cover-realize-extension} that
\begin{prop}
If $F\in \scope$ is covered by some architecture but is not extendable, then it is universally explicit. If $F$ is universally explicit, then none of its extensions has a covering architecture.
\end{prop}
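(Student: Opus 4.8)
The plan is to handle the two implications separately, in both cases reducing to Proposition~\ref{prop:cover-realize-extension}, which already converts ``covers'' into ``realizes'' whenever the covered specification has no proper extension in the scope. Throughout I read ``$F$ is not extendable'' as: there is no $F' \in \scope$ with $F \subset F'$.

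For the first implication, suppose $F \in \scope$ is covered by some architecture $C_0 \in \platform$ and is not extendable. I would verify the two clauses of universal explicitness directly. For clause (i), since $C_0$ covers $F$ and $F$ is not extendable, Proposition~\ref{prop:cover-realize-extension} yields $Realizes(C_0, F)$, so a realizing architecture exists. For clause (ii), I would take an arbitrary $C \in \platform$ with $Covers(C,F)$ and apply the same proposition (again invoking non-extendability) to conclude $Realizes(C,F)$; this is exactly the implication $Covers(C,F) \implies Realizes(C,F)$ required for all such $C$. Together these give universal explicitness.

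For the second implication, suppose $F$ is universally explicit and argue by contradiction: assume some extension $F'$ of $F$, so $F \subset F'$ with $F' \in \scope$, admits a covering architecture $C$, i.e. $Covers(C,F')$. Unfolding the definition of Covers gives $Provided\_by(C) \in \scope$ together with $F' \subseteq Provided\_by(C)$. Then from $F \subset F' \subseteq Provided\_by(C)$ I obtain $F \subseteq Provided\_by(C)$, so the pair $(C,F)$ also satisfies $Covers(C,F)$. Clause (ii) of universal explicitness forces $Realizes(C,F)$, i.e. $F = Provided\_by(C)$, whence $F' \subseteq Provided\_by(C) = F$, contradicting the proper containment $F \subset F'$. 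Hence no extension of $F$ can have a covering architecture.

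The arguments are short, so I do not expect a genuine obstacle; the real content is already packaged in Proposition~\ref{prop:cover-realize-extension}, and both halves reduce to feeding its hypotheses in correctly. The only thing to watch is the bookkeeping with the subset relations and the precise reading of ``not extendable,'' since the second part turns on the strictness of $F \subset F'$ colliding with the equality $F = Provided\_by(C)$ supplied by realization.
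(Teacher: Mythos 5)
Your proof is correct and follows exactly the route the paper intends: the paper states this result without an explicit proof, remarking only that it ``follows from Proposition~\ref{prop:cover-realize-extension},'' and your argument is a faithful unfolding of that reduction (using the proposition for both clauses of the first implication, and unwinding the definitions of \emph{Covers} and \emph{Realizes} against the strict inclusion $F \subset F'$ for the second). No gaps.
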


In the ECPL, $spec_2$ is covered and cannot be extended; so it is universally explicit. On the contrary, if a specification has an extension which is covered, the same also covers the extended specification.

\subsubsection{Unique Implementation}
A given specification may be implemented by multiple architectures. This may be a desirable criterion of the platform from the perspective of optimization among various choices. Thus the specifications which are implemented by single architectures are to be identified.

$F \in \scope$ has a unique implementation if $\exists C\in \platform \suchthat (Covers(C,F) \mand \forall C' \in \platform \suchthat (Covers(C', F) \implies C = C'))$.

In ECPL, each specifications including $Door\,relock$ has a unique implementation. On contrary, $spec_1$ has more than one implementation.

\subsubsection{Common, live and dead elements}
Identification of common, live and dead elements in an SPL is one of the basic analyses identified in the SPL community. We redefine these concepts in terms of the our notion of products.
\begin{enumerate}
    \item An element $e$ is {\em common} if $\forall \tuple{F,C} \in Prod(\spl) \cdot e \in F\cup C$.
	\item An element $e$ is {\em live} if $\exists \tuple{F,C} \in Prod(\spl) \cdot e \in F\cup C$.
	\item An element $e$ is {\em dead} if $\forall \tuple{F,C} \in Prod(\spl) \cdot e \not\in F\cup C$.
\end{enumerate}

In ECPL, the feature {\em Manual\,lock} is dead. All the other features are live. The component {\em Door\,lock\,manager} is common. 

\subsubsection{Superfluous Component}
A component is superfluous if the platform without the component suffices to provide the same set of specifications. 

Let $P \subseteq Prod(\spl)$, $spec(P) = \{F | \tuple{F,C} \in P\}$.
Let $Prod_{\neg c}(\spl) = \{\tuple{F,C} | \tuple{F,C} \in Prod(\spl) \mand (c \not\in C) \}$. $c$ is Superfluous if $spec(Prod(\spl)) = spec(Prod_{\neg c}(\spl))$.

Superfluousness is relative to a given platform. If in an SPL $\spl$, $prov(f) = \{\{a\}, \{b\}\}$, $\scope = \{\{f\}\}$ and $\platform = \{\{a\}, \{b\}\}$, then both $a$ and $b$ are superfluous w.r.t. $\spl$, whereas if either $\{a\}$ or $\{b\}$ is removed from the platform, the remaining $\{b\}$ or $\{a\}$ is not superfluous anymore (w.r.t. the reduced SPL).

\begin{lem}
Let $c \in \components$ be Superfluous for $\spl$. Then, for every $C\in \platform (c \in C \implies (\exists C'\in \platform \cdot c\not\in C' \mand Provided\_by(C) = Provided\_by(C'))$.
\end{lem}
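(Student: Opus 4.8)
The plan is to unfold the definition of \emph{Superfluousness} and then, for a fixed architecture containing $c$, manufacture the required $c$-free architecture by invoking superfluousness on the specification that the architecture itself provides. Recall that $c$ Superfluous means $spec(Prod(\spl)) = spec(Prod_{\neg c}(\spl))$, i.e. every specification in $\scope$ covered by some architecture is already covered by an architecture that omits $c$. So the whole argument reduces to upgrading ``covered by a $c$-free architecture'' to ``provided exactly by a $c$-free architecture.''

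First I would fix an arbitrary $C \in \platform$ with $c \in C$ and set $F = Provided\_by(C)$. The substantive case is $F \in \scope$; if $F \notin \scope$ then $C$ lies in no product and the Superfluousness hypothesis says nothing about it, so I would flag that the quantifier is implicitly restricted to architectures whose provided set is a genuine specification (exactly the ``sound'' part of the platform). Assuming $F \in \scope$, then since $F \subseteq F = Provided\_by(C)$ we get $Covers(C,F)$, hence $\tuple{F,C} \in Prod(\spl)$ and therefore $F \in spec(Prod(\spl))$. Applying Superfluousness now gives $F \in spec(Prod_{\neg c}(\spl))$, so there is some $C' \in \platform$ with $c \notin C'$ and $Covers(C',F)$; unwinding \emph{Covers} yields $Provided\_by(C') \in \scope$ together with $F \subseteq Provided\_by(C')$. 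This already delivers a $c$-free architecture $C'$ whose provided features contain those of $C$.

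The hard part will be promoting the inclusion $F \subseteq Provided\_by(C')$ to the equality $Provided\_by(C) = Provided\_by(C')$ demanded by the statement: nothing so far prevents the $c$-free witness $C'$ from \emph{over-providing}, i.e. from realizing a strict extension of $F$ rather than $F$ itself. To close this gap I would appeal to Proposition~\ref{prop:cover-realize-extension}: when $F$ is not extendable in $\scope$ it is universally explicit, so \emph{every} architecture covering $F$ must realize it, forcing $Provided\_by(C') = F = Provided\_by(C)$.

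I expect the real difficulty — and possibly a needed side-hypothesis — to live precisely in justifying that this non-extendability can be assumed without loss of generality. The natural repair is to first replace $F$ by a $\subseteq$-maximal specification of $\scope$ that $C$ provides and re-run the argument there, or to restrict the conclusion to architectures that realize a maximal (non-extendable) specification. If over-provision cannot in fact be excluded in general, then the equality in the conclusion would have to be weakened to the statement that discarding every architecture using $c$ leaves the covered specifications of $\scope$ unchanged, which is exactly what Superfluousness directly guarantees.
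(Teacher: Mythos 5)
The paper states this lemma without any proof, so there is nothing to compare your attempt against; judged on its own, your analysis is essentially correct, and the two obstacles you isolate are both real. In fact the second one is fatal: the lemma as stated is false, and no appeal to Proposition~\ref{prop:cover-realize-extension} can rescue it when $Provided\_by(C)$ is extendable in $\scope$. A concrete counterexample: take $\features=\{f,g\}$, $\components=\{a,b,c\}$, $prov(f)=req(f)=\{\{c\},\{a\}\}$, $prov(g)=req(g)=\{\{b\}\}$, $\scope=\{\{f\},\{f,g\}\}$ and $\platform=\{\{c\},\{a,b\}\}$. Then $Provided\_by(\{c\})=\{f\}$ and $Provided\_by(\{a,b\})=\{f,g\}$, both in $\scope$, so $spec(Prod(\spl))=\{\{f\},\{f,g\}\}=spec(Prod_{\neg c}(\spl))$ and $c$ is Superfluous; yet for $C=\{c\}$ the only $c$-free architecture provides $\{f,g\}\neq\{f\}$, so the claimed $C'$ does not exist. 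The paper's own ECPL discussion exhibits the same phenomenon: $S_1$ is covered by $A_4$, $A_6$ and $A_8$, whose $Provided\_by$ sets are pairwise different, so ``covers the same specification'' simply does not entail ``provides the same feature set.'' Your first caveat (architectures with $Provided\_by(C)\notin\scope$ lie outside every product, so Superfluousness constrains them not at all) is an independent, equally genuine failure of the stated quantifier.

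Everything you actually derive is sound: from $F=Provided\_by(C)\in\scope$ you correctly get $Covers(C,F)$, hence $F\in spec(Prod(\spl))=spec(Prod_{\neg c}(\spl))$, hence a $c$-free $C'$ with $F\subseteq Provided\_by(C')\in\scope$ --- and that inclusion is the strongest conclusion obtainable. Your proposed repairs are the right ones: either restrict the conclusion to architectures whose provided set is non-extendable in $\scope$ (where Proposition~\ref{prop:cover-realize-extension} does force $Provided\_by(C')=F$), or weaken the equality $Provided\_by(C)=Provided\_by(C')$ to $Provided\_by(C)\subseteq Provided\_by(C')$ with $Provided\_by(C')\in\scope$ (equivalently, to the coverage statement that Superfluousness directly asserts). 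Be aware that the second repair makes the lemma an immediate unfolding of the definition rather than a substantive claim, and that under either repair the paper's remark that redundancy strengthens superfluousness only by the extra clause $C'\subseteq C$ needs corresponding adjustment, since redundancy also demands the exact equality that superfluousness cannot deliver.
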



\subsubsection{Redundant Component}
A component is redundant if it is not contributing to any feature in any architecture in the platform. $c \in \components$ is redundant if for every $C\in \platform (c \in C \implies (\exists C'\in \platform \cdot (c\not\in C' \mand C' \subseteq C \mand Provided\_by(C) = Provided\_by(C'))$.

Note that redundancy is a stronger version of superfluousness; a redundant component is superfluous whereas a superfluous element many not be redundant.

In ECPL, no component is neither superfluous nor redundant. Let us assume that we have a component called $Door\,Relock_{Alt}$ such that $\{Door\,Relock_{Alt}, \,Auto\,lock \}$ provides the feature $Door\,Relock$. This component would be redundant because $Auto\,lock$ already provides the feature $Door\,Relock$.

It is expected that an SPL can be optimized by omitting the redundant components without affecting the set of products.
\begin{lem}
Let $c\in \components$ be redundant. Construct a SPL $\spl' = \tuple{\scope, \trace', \platform}$ where, $\trace'$ be a traceability relation with $req'(f) = req(f) \setminus \{C | c\in C\}$ and $prov'(f) = prov(f) \setminus \{C | c\in C\}$. Then, $Prod(\spl) = Prod(\spl')$.
\end{lem}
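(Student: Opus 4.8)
The plan is to reduce the claim to a single pointwise identity. Since $\spl$ and $\spl'$ share the same scope $\scope$ and the same platform $\platform$, and since $Prod(\cdot)$ is determined entirely by which pairs $\tuple{F,C}$ satisfy $Covers$, while $Covers(C,F)$ depends only on $\scope$ and on the map $Provided\_by(\cdot)$, it suffices to prove that $Provided\_by_{\spl}(C) = Provided\_by_{\spl'}(C)$ for every $C \in \platform$. Once this equality holds, $Covers(C,F)$ holds in $\spl$ exactly when it holds in $\spl'$, so the two product sets coincide by definition and we are done.

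For the easy inclusion I would observe that $\trace'$ is obtained from $\trace$ purely by deleting sets, so $prov'(f) \subseteq prov(f)$ and $req'(f) \subseteq req(f)$ for every feature $f$. Hence any witnesses $C_2 \subseteq C_1 \subseteq C$ certifying $implements_{\spl'}(C,f)$ also certify $implements_{\spl}(C,f)$, giving $Provided\_by_{\spl'}(C) \subseteq Provided\_by_{\spl}(C)$ for all $C \in \platform$ without any use of the redundancy hypothesis.

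The reverse inclusion is where redundancy enters, and it is the crux. I would split on whether $c \in C$. If $c \notin C$, then no set containing $c$ can be a subset of $C$, so deleting the sets containing $c$ from $prov(f)$ and $req(f)$ removes no witness that was ever usable for $C$; thus $implements_{\spl}(C,f)$ and $implements_{\spl'}(C,f)$ are equivalent and the two maps agree on $C$. If $c \in C$, I would invoke redundancy to obtain $C' \in \platform$ with $c \notin C'$, $C' \subseteq C$, and $Provided\_by_{\spl}(C) = Provided\_by_{\spl}(C')$. For any $f \in Provided\_by_{\spl}(C) = Provided\_by_{\spl}(C')$, pick witnesses $C_2 \subseteq C_1 \subseteq C'$; since $C_1 \subseteq C'$ and $c \notin C'$, neither $C_1$ nor $C_2$ contains $c$, so both survive the deletion and lie in $prov'(f)$ and $req'(f)$. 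Because $C' \subseteq C$, the chain $C_2 \subseteq C_1 \subseteq C$ then witnesses $implements_{\spl'}(C,f)$, so $f \in Provided\_by_{\spl'}(C)$.

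Combining the two inclusions yields $Provided\_by_{\spl}(C) = Provided\_by_{\spl'}(C)$ for every $C \in \platform$, and the reduction in the first paragraph finishes the proof. The main obstacle is exactly the second case of the reverse inclusion: one is tempted to repair a witness used directly at $C$, but the deletion may destroy it. The correct move is to transport the implementation through the redundancy witness $C'$, where no set containing $c$ is needed, and then lift it back up to $C$ using $C' \subseteq C$. The subset condition $C' \subseteq C$ in the definition of redundancy, which is precisely the feature that distinguishes redundancy from mere superfluousness, is what makes this lift valid.
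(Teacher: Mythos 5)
Your proof is correct, and since the paper states this lemma without supplying any proof, your argument in fact fills a gap rather than duplicating one: the reduction to the pointwise identity $Provided\_by_{\spl}(C) = Provided\_by_{\spl'}(C)$ for all $C \in \platform$, the trivial inclusion from $prov'(f) \subseteq prov(f)$ and $req'(f) \subseteq req(f)$, and the case split on $c \in C$ with the transport of witnesses through the redundancy witness $C'$ are exactly what the definitions demand. Your closing observation that the lift back from $C'$ to $C$ hinges on the condition $C' \subseteq C$ --- the very clause that separates redundancy from superfluousness --- correctly pinpoints why the analogous statement would fail for a merely superfluous component.
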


\subsubsection{Critical Component}
Given an $f \in \features$, a component $c$ is critical for $f$ if for all $C\in \platform, (c \not\in C \implies \neg implements_{\spl}(C, f))$.

In ECPL, all the components are critical. Let us assume a component $Auto\,lock_{Alt}$ which is an alternative to $Auto\,lock$ and also provides the feature $Door\,lock$. In such case neither $Auto\,lock$ or $Auto\,lock_{Alt}$ are critical for the feature $Door\,lock$ but $Auto\,lock$ remains critical for the feature $Auto\,relock$.

\subsubsection{Emerging Features}
When a specification is not realizable, but is covered by one or more architectures, the emerging features $Emerging(F) \defines \{ \tuple{C, Provided\_by(C)\setminus F} | Covers(C,F)\}$. 

$Emerging(F)$ gives the covering architectures and the emerging features corresponding to the architecture. 

In ECPL, while considering the only architecture that cover $\tuple{Power\,lock,Manual,Door\,lock,f\_Speed}$, $Door\,relock$ will emerge.
 


\subsection{Canonical Traceability Relation}
A given traceability relation can be reduced to a canonical form without affecting the set of features implementable in the SPL. We define the canonical form in the following.

\begin{defn}
$\trace$ is {\em non-redundant} if for every feature $f$,
\begin{enumerate}
    \item $C_i, C_j \in prov(f), i\neq j$ implies $C_i \not\subseteq C_j$, and
	\item $C_i, C_j \in req(f), i\neq j$ implies $C_i \not\subseteq C_j$.
\end{enumerate} 
\end{defn}
 Intuitively, if a smaller set of components implements a feature, a larger set also will. On the other hand, if a larger set of components is required to implement a feature, a smaller set is required automatically. Given a traceability relation, one can check if it is non-redundant and convert it to a non-redundant relation by removing the larger (resp. smaller) sets in $prov(f)$ (resp. $req(f)$).

\begin{defn}
$\trace$ is {\em internally consistent} if $\forall f \in \features$, $\forall C \subseteq \components$, $(C \in prov(f) \implies (\exists C' \in req(f) \cdot C' \subseteq C))$.
\end{defn}
Intuitively, internal consistency of a traceability relation states that each set of components in $prov(f)$ can indeed satisfy the mandatory requirements (coming from $req(f)$ of $f$.

Given a traceability relation, we can reduce it to a canonical form by the following  operations for the $prov(.)$ and $req(.)$ of each feature $f$.
%

\begin{algorithm}
\caption{Canonization of Traceability Relation}
\begin{algorithmic}[1]
\IF
{$prov(f) = \emptyset$ or $prov(f)$ is undefined} 
\STATE $prov(f) \leftarrow \bot; req(f)\leftarrow \bot$
\ENDIF
\IF {$C_i, C_j \in prov(f), i\neq j$, $C_i \subseteq C_j$}
\STATE $prov(f)\leftarrow prov(f) \setminus \{C_j\}$.
\ENDIF
\IF {$C_i, C_j \in req(f), i\neq j$, $C_i \subseteq C_j$} 
\STATE $req(f) \leftarrow req(f) \setminus \{C_i\}$.
\ENDIF
\IF {$C \in prov(f)$, but $forall C_i \in req(f), C_i \not\subseteq C$}
 \STATE $prov(f) \leftarrow prov(f) \setminus \{C\}$.  
\ENDIF
\end{algorithmic}
\end{algorithm}

\begin{clm}
For a given SPL $\spl = \tuple{\scope, \platform, \trace}$, the above procedure results in a canonical traceability relation $\trace'$ such that for all $C \subseteq \components$, $implements_{\spl}(C,f)$ iff $implements_{\spl'}(C,f)$.
\end{clm}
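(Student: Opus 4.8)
The plan is to reduce the claim to a per-edit preservation argument, after first isolating the only data that $implements_{\spl}(C,f)$ actually depends on. Call a set $C_1 \in prov(f)$ \emph{active} if some $C_2 \in req(f)$ satisfies $C_2 \subseteq C_1$. Unfolding the definition of $implements$, one sees that $implements_{\spl}(C,f)$ holds iff some active providing set $C_1$ satisfies $C_1 \subseteq C$, equivalently iff $C$ contains one of the $\subseteq$-minimal active providing sets of $f$. So the entire statement reduces to showing that canonization leaves the family of minimal active providing sets unchanged, for every feature $f$.

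I would then note that the procedure terminates, since each guarded block strictly shrinks the finite sets $prov(f)$ or $req(f)$; hence it suffices to show that one elementary edit preserves $implements_{\spl}(C,f)$ for all $C$, and to conclude by induction on the number of edits. Three of the four edit types are then routine. If $prov(f)=\emptyset$ there is no witness $C_1$, so implementability is already identically false and resetting both maps to $\bot$ keeps it so. The internal-consistency edit deletes a providing set containing no member of $req(f)$, i.e.\ an \emph{inactive} set, so it alters neither the active family nor implementability. The provides-reduction edit deletes the larger $C_j$ when $C_i \subseteq C_j$ lie in $prov(f)$; this is harmless \emph{once $C_i$ is known to be active}, for then $C_i \subseteq C_j \subseteq C$ lets $C_i$ inherit every witness of $C_j$, and $C_j$ was not minimal to begin with.

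The step I expect to be the main obstacle is the requirement-reduction edit, precisely because the existential condition ``$C_2 \subseteq C_1$'' governs activity and removing the wrong alternative can silently collapse implementability to false (e.g.\ $prov(f)=\{\{a\}\}$ with $req(f)=\{\{a\},\{a,b\}\}$ becomes unimplementable if $\{a\}$ is discarded). The correct reading is that, among $C_i \subseteq C_j$ in $req(f)$, a providing set is more easily activated by the \emph{smaller} alternative $C_i$, so the redundant one is the \emph{larger} $C_j$, whose deletion is safe since $C_1 \supseteq C_j$ forces $C_1 \supseteq C_i$; I therefore plan to prove preservation for removal of the larger required set and to reconcile this with the wording of that block. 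The same activity bookkeeping exposes an ordering constraint that I would build into the argument: the provides-reduction edit is only clean after the inactive providing sets have been purged, so I would sequence the internal-consistency block before the provides-antichain block (otherwise deleting an active $C_j$ in favour of an inactive $C_i$ loses a genuine implementation). Finally, to justify ``canonical'', I would check that at termination no guard is enabled, so each $prov(f)$ and $req(f)$ is an antichain and every surviving providing set is internally consistent.
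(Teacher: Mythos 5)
Your skeleton --- termination of the rewriting, canonicity read off from the fixpoint condition, and preservation of $implements$ checked rule by rule --- is exactly the paper's, which disposes of the claim in two sentences by asserting that ``each rule preserves implementability.'' Your execution, however, goes further than the paper's in a way that matters: that assertion is false for the algorithm as printed, and you have correctly located both failures. The requirement-reduction block really does delete the \emph{smaller} set $C_i$ from $req(f)$ (both in the pseudocode and in the surrounding prose, which says to remove the smaller sets in $req(f)$), and your counterexample $prov(f)=\{\{a\}\}$, $req(f)=\{\{a\},\{a,b\}\}$ is decisive: after that deletion the internal-consistency rule purges $\{a\}$ from $prov(f)$, the first rule then sets everything to $\bot$, and the feature becomes unimplementable everywhere, so the claimed ``iff'' fails. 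Your diagnosis --- activation of a providing set is an existential condition over $req(f)$, so only the \emph{larger} required set is redundant --- is the right repair. Likewise, your observation that the provides-antichain rule is only sound after inactive providing sets have been purged (else an active $C_j$ is discarded in favour of an inactive $C_i$, as in $prov(f)=\{\{a\},\{a,b\}\}$, $req(f)=\{\{b\}\}$) is a genuine ordering constraint that the paper's ``stops when no rules are applicable'' reading does not impose. In short: same decomposition as the paper, but yours is a proof of a corrected procedure, whereas the paper's one-line argument silently assumes per-rule preservation that two of the four rules, as written, do not satisfy.
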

\begin{proof}
The canonization algorithm stops when no rules are applicable. Then the conditions of the rules ensure that the resulting traceability relation is canonical.

In order to prove the preservation of implementability, it is easy to show that each rule preserves implementability. 
\end{proof}

\begin{thm}
\label{th:new-implements}
If $\spl$ is an SPL with a canonical traceability relation, $implements_{\spl}(C, f)$ if $\exists C_1 \in prov(f) \suchthat C_1 \subseteq C$.
\end{thm}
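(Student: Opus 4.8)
The plan is to read this theorem as a simplification of the defining condition for $implements_{\spl}$. The original \emph{Implements} definition demands both a providing set $C_1 \in prov(f)$ and a requiring set $C_2 \in req(f)$, nested as $C_2 \subseteq C_1 \subseteq C$; the claim is that under canonicity the $req$ clause becomes automatic, so that the single condition $\exists C_1 \in prov(f) \suchthat C_1 \subseteq C$ already captures $implements_{\spl}(C,f)$. I would therefore prove it as a biconditional and, along the way, isolate which of the two canonical properties does the work. Inspecting the definitions, the relevant one is \emph{internal consistency}; the non-redundancy clause plays no role here.

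First I would dispatch the direction that requires no hypothesis at all. If $implements_{\spl}(C,f)$ holds, then by definition there exist $C_1 \in prov(f)$ and $C_2 \in req(f)$ with $C_2 \subseteq C_1 \subseteq C$; in particular $C_1 \in prov(f)$ and $C_1 \subseteq C$, which is exactly the simplified condition. This direction holds for every traceability relation, canonical or not, so canonicity is not consumed here.

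The substantive direction (the one literally asserted by the theorem) is the converse, and here canonicity enters. Suppose $\exists C_1 \in prov(f)$ with $C_1 \subseteq C$. Since $\trace$ is canonical it is internally consistent, so the membership $C_1 \in prov(f)$ guarantees some $C_2 \in req(f)$ with $C_2 \subseteq C_1$. Chaining the inclusions yields $C_2 \subseteq C_1 \subseteq C$ with $C_1 \in prov(f)$ and $C_2 \in req(f)$, which is precisely the original definition of $implements_{\spl}(C,f)$. That closes the argument.

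The main obstacle is conceptual rather than computational: one must recognize that the last canonization rule (which deletes any $C \in prov(f)$ admitting no $C_i \in req(f)$ with $C_i \subseteq C$) is exactly what enforces internal consistency, and that internal consistency is the property that lets the requires-clause be absorbed into the provides-clause. Once this is identified, the remaining reasoning is a one-line chaining of set inclusions. For completeness I would also observe that the converse direction assumes $prov(f) \neq \emptyset$ (we are handed a witness $C_1 \in prov(f)$), so the degenerate case collapsed by the first canonization rule, where $prov(f)$ is set to $\bot$, never arises and cannot threaten the chaining.
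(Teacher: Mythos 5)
Your proof is correct and follows the same route as the paper: the paper's argument is exactly the observation that internal consistency of a canonical $\trace$ supplies, for any $C_1 \in prov(f)$, a $C_2 \in req(f)$ with $C_2 \subseteq C_1$, after which the chain $C_2 \subseteq C_1 \subseteq C$ gives $implements_{\spl}(C,f)$. Your additional remarks (the trivial converse direction and the handling of the $prov(f) = \emptyset$ case) go beyond what the paper writes but do not change the approach.
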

\begin{proof}
In a canonical traceability relation, due to internal consistency, for every $C' \in prov(f), \exists C'' \in req(f) \suchthat C'' \subseteq C'$. Hence the result.
\end{proof}

Since one can always canonize the traceability relation of an SPL, henceforth we will assume that the SPL under scope is canonical. Thereby, the definition of implementation will henceforth be as given in \ref{th:new-implements}.

\section{Analysis of the ECPL}
\label{sec_casestudy1}
In this section, we analyze some properties of the ECPL example using QuBE. 

In ECPL, there are total 8 Features and 13 Components. The features are listed in Table ~\ref{tab_features} and the components are given in Table ~\ref{tab_components}. 

\begin{table}
	\centering
	\begin{tabular}{|c|c|}
	\hline
	$\textbf{Short-Hand}$ &   $\textbf{Feature}$  \\	
	\hline
	\hline
	$F_1$ &   $Manual$ $Lock$  \\
	\hline
	$F_2$ &   $Power$ $Lock$   \\
	\hline
	$F_3$ &   $Door$ $Lock$   \\
	\hline
	$F_4$ &   $Door$ $Relock$   \\
	\hline
	$F_5$ &   $F\_automatic$ \\
	\hline
	$F_6$ &   $F\_manual$ \\
	\hline
	$F_7$ &   $F\_speed$   \\
	\hline
	$F_8$ &   $Shift$ $out$ $of$ $Park$ \\
	\hline
	\end{tabular}
	\caption{Features in ECPL.}
    \label{tab_features}
\end{table}

\begin{table}
	\centering
	\begin{tabular}{|c|c|}
	\hline
	$\textbf{Short-Hand}$ &   $\textbf{Component}$  \\	
	\hline
	\hline
	$C_1$ &   $Door$ $Lock$ $Manager$  \\
	\hline
	$C_2$ &   $Unlock$ $Driver$ $Door$   \\
	\hline
	$C_3$ &   $Unlock$ $all$ $doors$   \\
	\hline
	$C_4$ &   $Lock$ $all$ $doors$  \\
	\hline
	$C_5$ &   $Auto$ $Lock$ \\
	\hline
	$C_6$ &   $Power$ $Lock$  \\
	\hline
	$C_7$ &   $Courtesy$ $switch$  \\
	\hline
	$C_8$ &   $Key$ $signal$ \\
	\hline
	$C_9$ &   $Sill$ $door$ $signal$  \\
	\hline
	$C_{10}$ &  $C\_automatic$ \\
	\hline
	$C_{11}$ &   $C\_manual$ \\
	\hline
	$C_{12}$ &   $Gear$ $in$ $park$  \\
	\hline
	$C_{13}$ &  $C\_speed$ \\
	\hline	
	\end{tabular}
	\caption{Components in ECPL.}
	\label{tab_components}
\end{table}

A specification is a subset of Features $\features$. The scope of an SPL is a collection of specifications: $\spec \subseteq \powerset(\features)$. In our example, scope of ECPL is $\spec = \{$ $S_1,$ $S_2,$ $S_3,$ $S_4,$ $S_5,$ $S_6,$ $S_7,$ $S_8$$\}$. All the specifications are represented in tabular form as shown in Table ~\ref{tab_spec}. A specification corresponds to a column and the 1's in the column select the features in the specification.

\begin{enumerate}
\item $S_1=\{$$Power$ $Lock$$,$ $F\_automatic$$\}$
\item $S_2=\{$$Power$ $Lock$$,$ $F\_manual$$\}$
\item $S_3=\{$$Power$ $Lock$$,$ $F\_automatic$$,$ $Door$ $Lock$$,$ $F\_speed$$\}$
\item $S_4=\{$$Power$ $Lock$$,$ $F\_manual$$,$ $Door$ $Lock$$,$ $F\_speed$$\}$
\item $S_5=\{$$Power$ $Lock$$,$ $F\_automatic$$,$ $Door$ $Lock$$,$ $Shift$ $out$ $of$ $Park$$\}$
\item $S_6=\{$$Power$ $Lock$$,$ $F\_automatic$$,$ $Door$ $Lock$$,$ $F\_speed$$,$ $Door$ $relock$$\}$.
\item $S_7=\{$$Power$ $Lock$$,$ $F\_manual$$,$ $Door$ $Lock$$,$ $F\_speed$$,$ $Door$ $relock$$\}$.
\item $S_8=\{$$Power$ $Lock$$,$ $F\_automatic$$,$ $Door$ $Lock$$,$ $Shift$ $out$ $of$ $Park$$,$ $Door$ $relock$$\}$.
\end{enumerate}

An architecture is a subset of components $\components$. An SPL platform consists of a set of architectures: $\platform \subseteq \powerset(\components)$. In ECPL, the platform is $\platform = \{$$A_1,$ $A_2,$ $A_3,$ $A_4,$ $A_5,$ $A_6,$ $A_7,$ $A_8,$ $A_9$$\}$. The architectures are represented in Table ~\ref{tab_arch}.

\begin{enumerate}
\item $A_1=\{$$Door$ $Lock$ $Manager$$,$ $Unlock$ $Driver$ $Door$$,$ $Unlock$ $all$ $doors$$,$ $Lock$ $all$ $doors$$\}$
\item $A_2=\{$$Door$ $Lock$ $Manager$$,$ $Unlock$ $Driver$ $Door$$,$ $Unlock$ $all$ $doors$$,$ $Lock$ $all$ $doors$$,$ $Auto$ $Lock$$,$ $C\_speed$$\}$
\item $A_3=\{$$Door$ $Lock$ $Manager$$,$ $Unlock$ $Driver$ $Door$$,$ $Unlock$ $all$ $doors$$,$ $Lock$ $all$ $doors$$,$ $Auto$ $Lock$$,$ $Gear$ $in$ $park$$\}$
\item $A_4=\{$$Door$ $Lock$ $Manager$$,$ $Unlock$ $Driver$ $Door$$,$ $Unlock$ $all$ $doors$$,$ $Lock$ $all$ $doors$$,$ $Power$ $Lock$$,$ $Courtesy$ $switch$$,$ $Key$ $signal$$,$ $Sill$ $door$ $signal$$,$
$C\_automatic$$\}$
\item $A_5=\{$$Door$ $Lock$ $Manager$$,$ $Unlock$ $Driver$ $Door$$,$ $Unlock$ $all$ $doors$$,$ $Lock$ $all$ $doors$$,$ $Power$ $Lock$$,$ $Courtesy$ $switch$$,$ $Key$ $signal$$,$ $Sill$ $door$ $signal$$,$
$C\_manual$$\}$
\item $A_6=\{$$Door$ $Lock$ $Manager$$,$ $Unlock$ $Driver$ $Door$$,$ $Unlock$ $all$ $doors$$,$ $Lock$ $all$ $doors$$,$ $Auto$ $Lock$$,$ $C\_speed$$,$ $Power$ $Lock$$,$ $Courtesy$ $switch$$,$ $Key$ $signal$$,$ $Sill$ $door$ $signal$$,$ $C\_automatic$$\}$
\item $A_7=\{$$Door$ $Lock$ $Manager$$,$ $Unlock$ $Driver$ $Door$$,$ $Unlock$ $all$ $doors$$,$ $Lock$ $all$ $doors$$,$ $Auto$ $Lock$$,$ $C\_speed$$,$ $Power$ $Lock$$,$ $Courtesy$ $switch$$,$ $Key$ $signal$$,$ $Sill$ $door$ $signal$$,$ $C\_manual$$\}$
\item $A_8=\{$$Door$ $Lock$ $Manager$$,$ $Unlock$ $Driver$ $Door$$,$ $Unlock$ $all$ $doors$$,$ $Lock$ $all$ $doors$$,$ $Auto$ $Lock$$,$ $Gear$ $in$ $park$$,$ $Power$ $Lock$$,$ $Courtesy$ $switch$$,$ $Key$
$signal$$,$ $Sill$ $door$ $signal$$,$ $C\_automatic$$\}$
\item $A_9=\{$$Door$ $Lock$ $Manager$$,$ $Unlock$ $Driver$ $Door$$,$ $Unlock$ $all$ $doors$$,$ $Lock$ $all$ $doors$$,$ $Auto$ $Lock$$,$ $Gear$ $in$ $park$$,$ $Power$ $Lock$$,$ $Courtesy$ $switch$$,$ $Key$
$signal$$,$ $Sill$ $door$ $signal$$,$ $C\_manual$$\}$
\end{enumerate}

\begin{table}
	\centering
	\begin{tabular}{|c|c|c|c|c|c|c|c|c|}
	\hline
	$\textbf{Specifications} \over{\textbf{Features}} $ & $S_1$ & $S_2$ & $S_3$ & $S_4$ & $S_5$ & $S_6$ & $S_7$ & $S_8$\\
	\hline
	\hline	
	$F_1$ &   &  &  &  &  &  &  &  \\
	\hline
	$F_2$ & 1 & 1 & 1 & 1 & 1 & 1 & 1 & 1 \\
	\hline
	$F_3$ &   &   & 1 & 1 & 1 & 1 & 1 & 1 \\
	\hline
	$F_4$ &   &   &   &   &   & 1 & 1 & 1 \\
	\hline
	$F_5$ & 1 &   & 1 &   & 1 &   &   &   \\
	\hline
	$F_6$ &   & 1 &   & 1 &   & 1 &   & 1 \\
	\hline
	$F_7$ &   &   & 1 & 1 &   & 1 & 1 &   \\
	\hline
	$F_8$ &   &   &   &   & 1 &   &   & 1 \\
	\hline
	\end{tabular}
	\caption{Specifications in tabular form.}
	\label{tab_spec}
\end{table}

\begin{table}
	\centering
	\begin{tabular}{|c|c|c|c|c|c|c|c|c|c|}
	\hline 
	$\textbf{Architectures} \over{\textbf{Components}}$ & $A_1$ & $A_2$ & $A_3$ & $A_4$ & $A_5$ & $A_6$ & $A_7$ & $A_8$ & $A_9$ \\
	\hline
	\hline	
	$C_1$ & 1 & 1 & 1 & 1 & 1 & 1 & 1 & 1 & 1 \\
	\hline
	$C_2$ & 1 & 1 & 1 & 1 & 1 & 1 & 1 & 1 & 1 \\
	\hline		
	$C_3$ & 1 & 1 & 1 & 1 & 1 & 1 & 1 & 1 & 1 \\
	\hline	
	$C_4$ & 1 & 1 & 1 & 1 & 1 & 1 & 1 & 1 & 1 \\
	\hline		
	$C_5$ &   & 1 & 1 &   &   & 1 & 1 & 1 & 1 \\
	\hline			
	$C_6$ &   &   &   & 1 & 1 & 1 & 1 & 1 & 1 \\
	\hline
	$C_7$ &   &   &   & 1 & 1 & 1 & 1 & 1 & 1 \\
	\hline
	$C_8$ &   &   &   & 1 & 1 & 1 & 1 & 1 & 1 \\
	\hline		
	$C_9$ &   &   &   & 1 & 1 & 1 & 1 & 1 & 1 \\
	\hline		
	$C_{10}$ &   &   &   & 1 &   & 1 &   & 1 &   \\
	\hline		
	$C_{11}$ &   &   &   &   & 1 &   & 1 &   & 1 \\
	\hline		
	$C_{12}$ &   &   & 1 &   &   &   &   & 1 & 1 \\
	\hline		
	$C_{13}$ &   & 1 &   &   &   & 1 & 1 &   &   \\				
	\hline					
	\end{tabular}
	\caption{Architectures in tabular form.}
	\label{tab_arch}
\end{table}

The traceability relations (provides and requires) are as in Tables ~\ref{tab_req} and ~\ref{tab_prov}. We reproduce the tables here for ease of reference.

\begin{table}[h!]
	\centering
	\begin{tabular}{|c|c|c|}
	\hline
	$\textbf{Feature}$ & $\textbf{Component}$ \\
	\hline
	\hline
	Power lock& Door lock manager\& Power lock\\
	Door lock& Auto lock\\
	Door relock& Auto lock\\
	F\_automatic& C\_automatic\\
	F\_manual& C\_manual\\
	Shift out of Park& Gear in Park\\
	F\_speed& C\_speed\\
	\hline
	\end{tabular}
	\caption{Requires relation in ECPL}
\end{table}

\begin{table}[h!]
	\centering
	\begin{tabular}{|c|c|c|}
	\hline
	$\textbf{Component/Interconnection}$ & $\textbf{Feature}$\\
	\hline
	\hline
	 Door lock manager \& Power lock& Power lock \\
	 C\_automatic& F\_automatic\\
	 C\_manual& F\_manual\\
	 Auto lock & Door lock \& Door relock \\
	 Gear in park  & Shift out of park \\
	 \hline
	 C\_speed & F\_speed\\
	 \hline	
	\end{tabular}
	\caption{Provides relation in ECPL}
\end{table}

\paragraph{Implements:} 
$implements_{\spl}(A,$ $f)$ if $\exists C_1 \in prov(f), C_2 \in req(f) \suchthat C_2 \subseteq C_1 \subseteq A$. The set of features implemented by an architecture $A$ is defined as $Provided\_by_{\spl}(A) = \{f | implements_{\spl}(A,$ $f)\}$.

$Examples:$ In ECPL, check if $implements_{\spl}($$A_4$$,$ $Power$ $Lock)$ holds.

$Solution:$ Let P1 = $prov(Power$ $Lock)$. From Table ~\ref{tab_prov}, P1 =  $prov(Power$ $Lock) =\{\{$$Door$ $Lock$ $Manager,$ $Power$ $Lock$$\}\}$. Let R1 = $req(Power$ $Lock)$. From Table ~\ref{tab_req}, R1 = $req(Power$ $Lock)=\{\{$$Door$ $Lock$ $Manager,$ $Power$ $Lock$$\}\}$. Since $R_1$ $\subseteq P_1$ $\subseteq A_4$, $implements_{\spl}($$A_4$$,$ $Power$ $Lock)$ holds. On other hand $R_1$ $\subseteq P_1$ $\nsubseteq A_1$, hence$implements_{\spl}($$A_1$$,$ $Power$ $Lock)$ does not hold.

For each feature, we can find the architectures which implement it. The results are listed in Table ~\ref{tab_impl}: the 1's in the column corresponding to an architecture gives us the features implemented.

\begin{table}
\centering
\begin{tabular}{|c|c|c|c|c|c|c|c|c|c|}
\hline $\textbf{Architectures} \over{\textbf{Features}}$ & $A_1$ & $A_2$ & $A_3$ & $A_4$ & $A_5$ & $A_6$ & $A_7$ & $A_8$ & $A_9$ \\
\hline
\hline $F_1$ &   &   &   &   &   &   &   &   &   \\ 
\hline $F_2$ &   &   &   & 1 & 1 & 1 & 1 & 1 & 1 \\ 
\hline $F_3$ &   & 1 & 1 &   &   & 1 & 1 & 1 & 1 \\ 
\hline $F_4$ &   & 1 & 1 &   &   & 1 & 1 & 1 & 1 \\ 
\hline $F_5$ &   &   &   & 1 &   & 1 &   & 1 &   \\ 
\hline $F_6$ &   &   &   &   & 1 &   & 1 &   & 1 \\ 
\hline $F_7$ &   & 1 &   &   &   & 1 & 1 &   &   \\ 
\hline $F_8$ &   &   & 1 &   &   &   &   & 1 & 1 \\ 
\hline 
\end{tabular} 
	\caption{Feature implementation in given SPL.}
	\label{tab_impl}
\end{table}

\paragraph{Realization:} Given $A\in \arch$ and $S\in \spec$, $Realizes(A,$ $S)$ if $S = Provided\_by(A)$.

$Example:$ In ECPL, check if $Realizes(A_4,$ $S_1)$ holds.

$Solution:$ The specification $S_1$ has the features $\{$$Power$ $Lock$$,$ $F\_automatic$$\}$. From Table ~\ref{tab_impl}, $Provided\_by(A_4)=\{$$Power$ $Lock$$,$ $F\_automatic$$\}$. Since $Provided\_by(A_4)$ = $S_1$, $Realizes(A_4,$ $S_1)$ holds. On the other hand, \\
$Provided\_by(A_5)=\{$$Power$ $Lock$$,$ $F\_manual$$\}$ $\neq S_1$, hence $Realizes(A_5,$ $S_1)$ does not hold.

The Table \ref{tab_real} shows all the specifications and it's corresponding realized architectures.

\begin{table}
\centering
\begin{tabular}{|c|c|c|c|c|c|c|c|c|c|}
\hline $\textbf{Architectures}\over{\textbf{Specifications}}$ & $A_1$ & $A_2$ & $A_3$ & $A_4$ & $A_5$ & $A_6$ & $A_7$ & $A_8$ & $A_9$ \\
\hline
\hline $S_1$ &   &   &   & 1 &   &   &   &   &   \\ 
\hline $S_2$ &   &   &   &   & 1 &   &   &   &   \\ 
\hline $S_3$ &   &   &   &   &   &   &   &   &   \\ 
\hline $S_4$ &   &   &   &   &   &   &   &   &   \\ 
\hline $S_5$ &   &   &   &   &   &   &   &   &   \\ 
\hline $S_6$ &   &   &   &   &   & 1 &   &   &   \\ 
\hline $S_7$ &   &   &   &   &   &   & 1 &   &   \\ 
\hline $S_8$ &   &   &   &   &   &   &   & 1 &   \\ 
\hline 
\end{tabular} 
	\caption{Specifications and the realizing architectures.}
	\label{tab_real}
\end{table}

\paragraph{Covers:} Given $A\in \platform$ and $S\in \scope$, $A$ covers $S$ if $Provided\_by(A) \in \scope \mand S \subseteq Provided\_by(A)$.

$Example:$ In ECPL, check $Covers(A_6,$ $S_1)$ Hold?

$Solution:$ The specification $S_1$ has $\{$$Power$ $Lock$$,$ $F\_automatic$$\}$ features. From Table ~\ref{tab_impl}, $Provided\_by(A_6)=\{$$Power$ $Lock$$,$ $Door$ $Lock$$,$ $Door$ $Relock$$,$ $F\_automatic$$\}$. Since $Provided\_by(A_6)$ $\in \scope$ and $S_1 \subseteq Provided\_by(A_6)$, hence $Covers(A_6,$ $S_1)$ hold. On the other hand, $Provided\_by(A_5)=\{$$Power$ $Lock$$,$ $F\_manual$$\}$ $\in \scope$ but $S_1 \nsubseteq Provided\_by(A_5)$, hence $Covers(A_5,$ $S_1)$ does not hold.

Similarly, for all other specifications we can find the architectures which cover the specifications. The Table \ref{tab_cove} has all the specifications and their covering architectures.

\begin{table}
\centering
\begin{tabular}{|c|c|c|c|c|c|c|c|c|c|}
\hline $\textbf{Architectures}\over{\textbf{Specifications}}$ & $A_1$ & $A_2$ & $A_3$ & $A_4$ & $A_5$ & $A_6$ & $A_7$ & $A_8$ & $A_9$ \\
\hline
\hline $S_1$ &   &   &   & 1 &   & 1 &   & 1 &   \\ 
\hline $S_2$ &   &   &   &   & 1 &   & 1 &   &   \\ 
\hline $S_3$ &   &   &   &   &   & 1 &   &   &   \\ 
\hline $S_4$ &   &   &   &   &   &   & 1 &   &   \\ 
\hline $S_5$ &   &   &   &   &   &   &   & 1 &   \\ 
\hline $S_6$ &   &   &   &   &   & 1 &   &   &   \\ 
\hline $S_7$ &   &   &   &   &   &   & 1 &   &   \\ 
\hline $S_8$ &   &   &   &   &   &   &   & 1 &   \\ 
\hline 
\end{tabular} 
	\caption{Specifications and their covering architectures.}
	\label{tab_cove}
\end{table}

\subsection{SPL Level Properties of ECPL}

\paragraph{Completeness:} In ECPL, from Table ~\ref{tab_cove} one can observe that every specification in scope $\scope$ is covered by some architecture in platform $\platform$. Hence, ECPL is complete.

\paragraph{Soundness:} From Table ~\ref{tab_cove} one can observe that the architectures $S_1,$ $S_2$ and $S_3$ do not cover any specification in scope $\scope$. Hence, ECPL is not sound.

\paragraph{Existentially Explicit:}
It is observed from Table ~\ref{tab_real} that the architectures $S_1$, $S_2$, $S_6$, $S_7$ and $S_8$ are realized by the architectures $A_4$, $A_5$, $A_6$, $A_7$ and $A_8$ respectively. Hence these specifications are existentially explicit. From the same table, one can observe that the specifications $S_3$, $S_4$ and $S_5$ are not realized by any architecture in the given platform.

\paragraph{Universally Explicit:} In ECPL, from Table ~\ref{tab_real} and ~\ref{tab_cove}, it is observed that the specifications $S_6$, $S_7$ and $S_8$  are realized by the architectures $A_6$, $A_7$ and $A_8$ respectively, and these are the only architectures which cover the respective specifications. Hence, these specifications are universally explicit. As we have already seen from Table ~\ref{tab_real}, the architectures $S_3$, $S_4$ and $S_5$ are not realized at all. The remaining architectures $S_1$ and $S_2$ are realized by $A_4$ and $A_5$ respectively, but $S_1$ is also strictly covered (covered but not realized) by architectures $A_6$ and $A_7$ and $S_2$ is strictly covered by $A_7$. Hence, the specifications $S_1$, $S_2$, $S_3$, $S_4$ and $S_5$ are not universally explicit.

\paragraph{Unique Implementation:}
In an SPL, a given specification is said to be uniquely implemented if it is covered by exactly one architecture. In ECPL, from Table ~\ref{tab_cove} it is found that the specifications $S_3$, $S_4$, $S_5$, $S_6$, $S_7$ and $S_8$ are covered by exactly one architecture ($A_6$, $A_7$, $A_8$, $A_6$, $A_7$, $A_8$ respectively). Hence, these specifications have unique implementation. On the other hand, the specifications $S_1$ and $S_2$ have multiple implementations.

\paragraph{Products:}
In ECPL, from Table ~\ref{tab_cove} we get $Prod(\spl) =$$\{$$\tuple{S_1,A_4},$ $\tuple{S_1,A_6},$ $\tuple{S_1,A_8},$ $\tuple{S_2,A_5},$ $\tuple{S_2,A_7},$ $\tuple{S_3,A_6},$ $\tuple{S_4,A_7},$ $\tuple{S_5,A_8},$ $\tuple{S_6,A_6},$ $\tuple{S_7,A_7},$ $\tuple{S_8,A_8}$$\}$.

\paragraph{Common, live and dead elements:}
From the set of products and referring to the tables ~\ref{tab_spec} and ~\ref{tab_arch}, we find that the common elements of ECPL are $\{$$Power$ $Lock^{1}$$,$ $Door$ $Lock$ $Manager$$,$ $Unlock$ $Driver$ $Door$$,$ $Unlock$ $all$ $doors$$,$ $Lock$ $all$ $doors$$,$ $Power$ $Lock^{2}$$,$ $Courtesy$ $switch$$,$ $Key$ $signal$$,$  $Sill$ $door$ $signal$$\}$. $Power$ $Lock^{1}$ is the feature and $Power$ $Lock^{2}$ is the component.

The live elements for $Prod(\spl)$ are $\{$$Power$ $Lock^{1}$$,$ $Door$ $Lock$$,$ $Door$ $Relock$$,$ $F\_automatic$$,$ $F\_manual$$,$ $F\_speed$$,$ $Shift$ $out$ $of$ $Park$ $,$ $Door$ $Lock$ $Manager$$,$ $Unlock$ $Driver$ $Door$$,$ $Unlock$ $all$ $doors$$,$ $Lock$ $all$ $doors$$,$ $Auto$ $Lock$$,$ $Power$ $Lock^{2}$$,$ $Courtesy$ $switch$$,$ $Key$ $signal$$,$ $Sill$ $door$ $signal$$,$ $C\_automatic$$,$ $C\_manual$$,$ $Gear$ $in$ $park$$,$ $C\_speed$$\}$. The only dead element is $Manual$ $Lock$.

\paragraph{Superfluous Component:} 
There are no superfluous components in ECPL. For example, consider the element $Auto Lock$. The specification $S_1$ is covered by architectures $A_4$, $A_6$ and $A_8$. If architectures $A_6$ and $A_8$, which include $Auto Lock$, are removed, then $S_1$ is still in the product (being implemented by $A_4$). However, $A_6$ is the only architecture covering $S_3$. Hence, when $A_6$ is removed, $\tuple{S_3, A_6}$ is removed from the list of products. This implies that $Auto Lock$ is not superfluous.

\paragraph{Redundant Component:} A component is redundant if it is not contributing to any feature in any architecture in the platform.
In ECPL, there are not any redundant component. Let us assume we have a component called $Door\,Relock_{Alt}$ such that $\{Door\,Relock_{Alt},$ $Auto\,Lock`\}$ provides the feature $Door\,Relock$. This component would be redundant because $Auto\,lock$ already provide the feature $Door\,Relock$.

\paragraph{Critical Component:} 
In ECPL, all the components are critical. Let us remove the component $C\_automatic$ from architecture $A_4$. Then, $implements_{\spl}(A_4,$ $F\_automatic))$ will not hold. Hence, we can say that the component $C\_automatic$ is critical for feature $F\_automatic$.

\paragraph{Emerging Features:} 
In ECPL, the specification $S_4$ is not realized by any architecture but it is covered by $A_7$. So the set of emerging features is $Provided\_by(A_7)$ $-$ $S_4$$=$$\{$$Door$ $relock$$\}$.

\subsection{Performance}
We have recorded the time required to check the satisfiability of the formulae for some analysis problems using QuBE (Refer Table~\ref{tab_time}). Each formula has been run three times and the average time is calculated. The performance of QuBE  seems quite good for small SPLs the size of ECPL.

\begin{table}
\centering
\begin{tabular}{|c|c|c|c|c|}
\hline $\textbf{Properties and Formulae}$ & $\textbf{Test}$ $\textbf{1}$ & $\textbf{Test}$ $\textbf{2}$ & $\textbf{Test}$ $\textbf{3}$ & $\textbf{Average Time(ms)}$\\
\hline
\hline $Implements$ &  3 & 2  &  2 & 2.33 \\ 
\hline $realizes$ &  2 & 2  &  2 & 2 \\ 
\hline $covers$ &  3 & 2  &  2 & 2.33 \\ 
\hline $complete$ &  3 & 2  &  2 & 2.33 \\  
\hline $sound$ &  4 & 3  &  3 & 3.33 \\  
\hline $existentially$ $explicit$ &  3 & 2  &  3 & 2.67 \\ 
\hline $critical$ &  3 & 3  &  3 & 3 \\ 
\hline $extended$ $features$ &  2 & 2  &  2 & 2 \\ 
\hline 
\end{tabular} 
	\caption{Time complexity for Properties and Formulae}
	\label{tab_time}
\end{table}

\section{Analysis between the specification and the implementation perspectives}
\label{sec_encoding}
In the literature, different analysis problems in SPL are usually encoded as propositional satisfiability problems\cite{Batory2005} and SAT solvers such as Yices, Bddsolve\cite{bddsolve} etc. are used to solve the problems. However, looking at the definition of $implements$ and the subsequent problems, we observe that there is quantification over the features and components which can be encoded as propositions. In fact, we show in the following that it is possible to transform the analysis problems of the previous section into QBF formula such that  the questions have an affirmative answer iff the corresponding QBF formulae hold.

\begin{enumerate}
\item Let ${\cal C}=\{c_1, \dots, c_n\}$ be the set of all components and let ${\cal F}=\{f_1, \dots, f_m\}$ be the set of all features. A subset of ${\cal F}$ is a specification, while a subset of ${\cal C}$ is called an architecture. A platform is a set of architectures $\overline{\cal C} \subseteq {\cal P}({\cal C})$. 
A scope is a set of specifications $\overline{\cal F} \subseteq {\cal P}({\cal F})$.


 \item  Given an architecture $C=\{c_1, \dots, c_k\}$, let $Prop(C)$ be the tuple of propositions

$Prop(C)(i) = \left\{  \begin{array}{cc} c_i~\mbox{if}~c_i \in C\\
                                        \neg c_i ~\mbox{if}~c_i \notin C
                       \end{array} \right.$

Thus, $Prop(C)$ is an $n$-tuple made up of 0's and 1's. 
The tuple $Prop(F)$ for a specification $F$ can be defined similarly. 





\item Let $f$ be a feature. Let $prov(f)=\{S_1, S_2 \dots, S_k\}$. Each $S_j$ is a set of components that provides $f$. Then we define $formula \_ prov(f)$  as $\bigvee_j \bigwedge_{c_i \in S_j} c_i$. 
$formula\_prov(f)$ is satisfiable whenever there is some set $S_j$ of components that
provide feature $f$. If the set $prov(f)$ is undefined(empty), then 
  $formula\_prov(f)$ is FALSE, since there are no components that provide feature $f$.

\item Let $f$ be a feature. Let $req(f)=\{S_1, S_2 \dots, S_k\}$.
$f$ requires at least one set $S_j$ of components for its implementation.
Then, we define $formula \_req(f)=\bigvee_j \bigwedge_{c_i \in S_j} c_i$. 
$formula\_req(f)$ is satisfiable iff $req(f)$ has at least one set (say $S_j$) of its required components. If 
$req(f)$ is empty or undefined, then $formula \_ req(f)$ is TRUE, since there are no requirements 
for $f$.


\item Let $f$ be a feature and let $prov(f)=\{S_1, S_2 \dots, S_k\}$. 
Given a tuple of component parameters $(c'_1, \dots, c'_n)$ where each $c'_i$ is 0 or 1, 
and a feature $f$,
we define the formula $f \_ implements(c'_{1}, \dots, c'_{n},f)$ as 
$$ \forall c_1 \dots c_n  \{[\bigwedge_{i=1}^n(c'_i \Rightarrow c_{i})]
 \Rightarrow formula\_prov(f) \}$$
Whenever the truth values of $c_i$ agree with those of the variables of some $S_j$ in $prov(f)$, or 
correspond to a superset of some $S_j$ in $prov(f)$, 
the formula $formula\_prov(f)$ will hold good. 


\item Let $F=\{f_1, f_2, \dots, f_l\}$ be a specification. For each $f_i$, let $prov(f_i)=\{S_{i1}, \dots, S_{ik}\}$ be defined.
Consider a tuple of component parameters $(c'_1, \dots, c'_n)$ and a tuple
of feature parameters $(f'_1, \dots, f'_m)$. Here again, each $c'_i, f'_j$ is a zero or a 1. 
Define $f \_covers(c'_{1}, \dots, c'_{n},f'_{1}, \dots, f'_{m})$ as
$$\bigwedge_{i=1}^m(f'_i \Rightarrow f \_ implements(c'_{1}, \dots, c'_{n},f_i))$$

Define $f \_realizes(c'_{1}, \dots, c'_{n},f'_{1}, \dots, f'_{m})$ as
$$\bigwedge_{i=1}^m(f'_i \Leftrightarrow f \_ implements(c'_{1}, \dots, c'_{n},f_i))$$


\item Let $\Psi=(\overline{\cal F}, \overline{\cal C}, {\cal T})$ be an SPL.
Let $\overline{\cal C} =\{S_1, \dots, S_k\}$. Given a tuple of component parameters $c'_1, \dots, c'_n$ where each $c'_i$ is 0 or 1, 
the predicate $C_I(c'_{1}, \dots, c'_{n})$ is defined as 
$$\bigvee_{j}\bigwedge \limits_{c_i \in Prop(S_j)} c'_i$$
 Then $C_I(c'_{1}, \dots, c'_{n})$ is satisfied iff $\{c'_k \mid c'_k=1\} = S_l$ for some $S_l\in \overline{\cal C}$.
$C_F(f'_1, \dots, f'_m)$ is defined similarly.

\end{enumerate}

\begin{lemma}(Internal Consistency of Traceability)
\label{tcf}
Consider a canonical SPL. 
Let TCF, the trace consistency formula be defined as $\forall c_1 \dots c_n. \bigwedge_{f \in F}[f\_prov(f) \Rightarrow f \_ req(f)]$. Then, ${\cal T}$ is  internally consistent iff TCF is true.
 \end{lemma}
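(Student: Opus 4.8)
The plan is to read $TCF$ semantically through the standard bijection between truth assignments to $c_1,\dots,c_n$ and subsets of $\components$: an assignment corresponds to the set $C=\{c_i\mid c_i=1\}$. The first step is the key evaluation fact, obtained by simply unfolding $\bigvee_j\bigwedge_{c_i\in S_j}c_i$: for any feature $f$, the formula $formula\_prov(f)$ is true at the assignment $C$ iff some $S_j\in prov(f)$ satisfies $S_j\subseteq C$, and symmetrically $formula\_req(f)$ is true at $C$ iff some $S\in req(f)$ satisfies $S\subseteq C$ (a conjunction $\bigwedge_{c_i\in S_j}c_i$ holds exactly when every component of $S_j$ is assigned $1$). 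Since $TCF$ quantifies universally over assignments, establishing it reduces to checking, for every $C$ and every $f$, that $formula\_prov(f)\implies formula\_req(f)$ holds at $C$.

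For the direction internal consistency $\implies TCF$, I would fix an arbitrary $C$ and $f$, assume $formula\_prov(f)$ holds at $C$, and use the evaluation fact to obtain $S_j\in prov(f)$ with $S_j\subseteq C$. Applying internal consistency to $S_j$ produces $C'\in req(f)$ with $C'\subseteq S_j$, whence $C'\subseteq C$ by transitivity, so $formula\_req(f)$ holds at $C$. Genericity of $C,f$ then yields $TCF$.

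For the converse $TCF\implies$ internal consistency, I would fix $f$ and a set $C\in prov(f)$ and instantiate the universal quantifier of $TCF$ at the assignment that is exactly $C$. Because $C\in prov(f)$, the disjunct of $formula\_prov(f)$ indexed by $C$ is fully satisfied, so $formula\_prov(f)$ holds at $C$; $TCF$ then forces $formula\_req(f)$ at $C$, and the evaluation fact returns $C'\in req(f)$ with $C'\subseteq C$, which is exactly the internal-consistency requirement for $f$ and $C$.

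The step I expect to be delicate, and the place where the canonicity hypothesis is actually used, is the degenerate cases of empty or undefined $prov$ and $req$. When $prov(f)=\emptyset$ the convention sets $formula\_prov(f)$ to $\mathit{false}$, so the inner implication is vacuously true, matching the vacuous truth of internal consistency at $f$; when $req(f)=\emptyset$ the convention sets $formula\_req(f)$ to $\mathit{true}$. The only case that could break the biconditional is $prov(f)\neq\emptyset$ with $req(f)=\emptyset$, where $TCF$ would spuriously report consistency. I would discharge this by appealing to canonicity: canonization couples the two maps, forcing both to $\bot$ as soon as $prov(f)$ is emptied and deleting from $prov(f)$ any set lacking a required subset, so in a canonical $\trace$ a nonempty $prov(f)$ paired with an empty $req(f)$ cannot arise. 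Once this case is excluded, the two inclusions above give the equivalence.
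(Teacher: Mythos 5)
The paper never actually proves this lemma: the lemmas of Section~\ref{sec_encoding}, including Lemma~\ref{tcf}, are stated bare, and the umbrella Theorem~\ref{thm:main} defers all proofs to a ``full version''/appendix that is not present. So there is no in-paper argument to compare against; judged on its own, your proof is correct and is exactly the argument one would expect. The evaluation fact (a disjunct $\bigwedge_{c_i\in S_j}c_i$ holds at the assignment induced by $C$ iff $S_j\subseteq C$) is right, the two directions follow cleanly from it, and you correctly isolate the one genuinely delicate point, namely the asymmetric conventions $formula\_prov(f)=\mathit{false}$ when $prov(f)$ is empty versus $formula\_req(f)=\mathit{true}$ when $req(f)$ is empty, which would break the ``TCF $\Rightarrow$ consistency'' direction exactly when $prov(f)\neq\emptyset$ and $req(f)=\emptyset$. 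Your appeal to the canonization algorithm (its last rule empties $prov(f)$ when $req(f)$ offers no subset, and its first rule then sets both maps to $\bot$) does exclude that case.

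One observation worth adding: the canonicity hypothesis does more than patch the degenerate case --- by the paper's own Theorem~\ref{th:new-implements}, a canonical traceability relation is \emph{already} internally consistent, so under a literal reading both sides of the biconditional are always true and the lemma is vacuous. The statement is only informative if ``canonical'' is read as merely non-redundant (rules 2--3 of the algorithm plus the coupling of $prov$ and $req$ at $\bot$), and under that reading your case analysis is precisely what carries the proof. Flagging this mismatch in the lemma's hypotheses would strengthen your write-up, but it is a defect of the statement, not of your argument.
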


\begin{lemma}(Implements)
\label{impl}
 Given a canonical SPL, a set of components $C$, and a feature $f$, $implements(C, f)$ iff $f \_ implements(c'_{1}, \dots, c'_{n}, f)$ where $Prop(C)=(c'_{1}, \dots, c'_n)$.
\end{lemma}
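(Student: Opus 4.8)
The plan is to unfold both sides to a purely set-theoretic condition on $prov(f)$ and then match them. The crucial preliminary observation is that, because the SPL is canonical, Theorem~\ref{th:new-implements} lets me replace the two-sided defining condition of $implements$ (namely $\exists C_1 \in prov(f), C_2 \in req(f) \suchthat C_2 \subseteq C_1 \subseteq C$) by the single condition $\exists C_1 \in prov(f) \suchthat C_1 \subseteq C$. This is exactly where canonicity is needed: the internal consistency guaranteed by canonization makes the $req(f)$ clause automatically satisfiable, and the QBF $f\_implements$ refers only to $formula\_prov(f)$ and never mentions $req(f)$, so without canonicity the two sides would not line up.

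Next I would pin down the semantics of the antecedent of the universal formula. Writing $Prop(C) = (c'_1, \dots, c'_n)$, each $c'_i$ is a fixed constant with $c'_i = 1$ when $c_i \in C$ and $c'_i = 0$ otherwise. Hence $c'_i \Rightarrow c_i$ is vacuously true whenever $c_i \notin C$ and collapses to $c_i$ whenever $c_i \in C$, so the antecedent $\bigwedge_{i=1}^n (c'_i \Rightarrow c_i)$ is logically equivalent to $\bigwedge_{c_i \in C} c_i$. Thus $f\_implements(c'_1, \dots, c'_n, f)$ asserts precisely that every truth assignment to $c_1, \dots, c_n$ which sets all components of $C$ to true (and is arbitrary on the rest) satisfies $formula\_prov(f) = \bigvee_j \bigwedge_{c_i \in S_j} c_i$.

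With this reading in hand both directions are short. For the forward direction I would assume $implements(C, f)$, obtain $C_1 \in prov(f)$ with $C_1 \subseteq C$, and take an arbitrary assignment $\alpha$ satisfying the antecedent; since $\alpha$ makes every component of $C$ true and $C_1 \subseteq C$, it makes every component of $C_1$ true, so the disjunct of $formula\_prov(f)$ indexed by $C_1$ holds under $\alpha$, and as $\alpha$ was arbitrary the universal formula is true. For the converse I would instantiate the universal with the single minimal assignment $\alpha^\star$ given by $\alpha^\star(c_i) = 1$ iff $c_i \in C$. This $\alpha^\star$ satisfies the antecedent, so by hypothesis it satisfies $formula\_prov(f)$; hence some disjunct $\bigwedge_{c_i \in S_j} c_i$ is true under $\alpha^\star$, which forces every component of that $S_j$ to lie in $C$, i.e. $S_j \subseteq C$. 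Taking $C_1 = S_j \in prov(f)$ then yields $implements(C, f)$ via Theorem~\ref{th:new-implements}.

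The only real subtlety — and the step I would be most careful about — is reading the fixed-constant implication $c'_i \Rightarrow c_i$ inside a formula that is subsequently universally quantified over the very variables $c_i$. Once the antecedent is correctly collapsed to the statement ``$C$ is contained in the set of true components,'' the equivalence between a universal claim over all supersets of $C$ and the existence of a single witness $S_j \subseteq C$ is immediate, with tightness in the backward direction supplied by evaluating at the minimal witnessing assignment $\alpha^\star$ rather than at an arbitrary satisfying assignment.
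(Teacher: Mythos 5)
Your proof is correct. The paper itself gives no proof of this lemma (the detailed arguments are deferred to an appendix/full version not included in the source), but your argument — collapsing the antecedent $\bigwedge_{i=1}^n(c'_i \Rightarrow c_i)$ to ``all components of $C$ are true,'' using canonicity via Theorem~\ref{th:new-implements} to drop the $req(f)$ clause, and witnessing the backward direction with the minimal assignment $\alpha^\star$ that sets exactly the members of $C$ to true — is precisely the intended reasoning, and it also handles the degenerate case $prov(f)=\emptyset$ correctly since the empty disjunction makes both sides false.
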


\begin{lemma}(Realizes, Covers)
\label{rc}
 Given a set of components $C$ and a set of features $F$, let $Prop(C)=(c'_{1}, \dots, c'_{n})$ and $Prop(F)=(f'_{1}, \dots, f'_{m})$. Then the following statements hold:
\begin{enumerate}
 \item $C$ covers $F$ iff $f \_ covers(c'_{1}, \dots, c'_{n}, f'_{1}, \dots, f'_{m})$
\item $C$ realizes $F$ iff $f \_ realizes(c'_{1}, \dots, c'_{n}, f'_{1}, \dots, f'_{m})$
\end{enumerate}
\end{lemma}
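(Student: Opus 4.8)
The plan is to reduce both equivalences to Lemma~\ref{impl}, which already supplies the per-feature correspondence $f\_implements(c'_1,\dots,c'_n,f_i) \iff implements(C,f_i)$ whenever $Prop(C)=(c'_1,\dots,c'_n)$. Since $f\_covers$ and $f\_realizes$ are defined as conjunctions over the features of clauses built from $f\_implements$, the task is to unfold these definitions, apply Lemma~\ref{impl} conjunct by conjunct, and recognize the resulting Boolean condition as the set-theoretic relation in the semantic definitions of \emph{covers} and \emph{realizes}.

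First I would fix the two translation dictionaries that convert propositional bits into set membership. By the definition of $Prop(F)$, the bit $f'_i$ equals $1$ exactly when $f_i \in F$; and by the definition of $Provided\_by(C)$, we have $implements(C,f_i)$ exactly when $f_i \in Provided\_by(C)$. With these in hand, every atom occurring in the formulas acquires a membership reading, and the rest is bookkeeping.

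For the realization part, I would unfold $f\_realizes = \bigwedge_{i=1}^m (f'_i \Leftrightarrow f\_implements(c'_1,\dots,c'_n,f_i))$. Applying Lemma~\ref{impl} to each conjunct rewrites $f\_implements(\dots,f_i)$ as $implements(C,f_i)$, so the whole formula holds iff for every $i$ we have $(f_i \in F) \iff (f_i \in Provided\_by(C))$. Ranging this biconditional over all $m$ features is exactly the set equality $F = Provided\_by(C)$, i.e.\ $Realizes(C,F)$. Because the argument proceeds through a chain of equivalences, both directions of the ``iff'' are obtained simultaneously and no separate converse is needed.

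For covers, the same unfolding of $f\_covers = \bigwedge_{i=1}^m (f'_i \Rightarrow f\_implements(c'_1,\dots,c'_n,f_i))$ together with Lemma~\ref{impl} shows that the formula holds iff for every $i$, $f_i \in F$ implies $f_i \in Provided\_by(C)$, that is, $F \subseteq Provided\_by(C)$. The step I expect to require the most care is the extra clause $Provided\_by(C) \in \scope$ present in the semantic definition of covers but absent from $f\_covers$: the formula encodes only the inclusion, whereas scope membership is enforced separately by the predicate $C_F$ when the clauses are assembled into a full analysis query. I would therefore either state the covers equivalence modulo the scope constraint carried by $C_F$, or conjoin $C_F(Prop(Provided\_by(C)))$ explicitly and check that the result matches the two-part semantic definition term by term.
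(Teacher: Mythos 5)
Your reduction to Lemma~\ref{impl} is the right move, and it is essentially the only sensible route: the paper states Lemma~\ref{rc} without proof and defers the details of Theorem~\ref{thm:main} to a full version, so there is no in-paper argument to diverge from. The \emph{realizes} half of your argument is complete and correct: unfolding the conjunction of biconditionals, replacing each $f\_implements(c'_1,\dots,c'_n,f_i)$ by $implements(C,f_i)$ via Lemma~\ref{impl}, and reading $f'_i=1$ as $f_i\in F$ and $implements(C,f_i)$ as $f_i\in Provided\_by(C)$ yields exactly $F=Provided\_by(C)$, with both directions delivered simultaneously by the chain of equivalences.

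Your caveat on the \emph{covers} half is well taken and points at a genuine defect in the statement rather than in your proof: $f\_covers$ encodes only $F\subseteq Provided\_by(C)$, while the semantic definition of covers additionally demands $Provided\_by(C)\in\scope$, so item~1 of the lemma is false as literally written. A witness is the paper's own $arch_3$, which satisfies the inclusion for $\{Power\,lock, f\_Manual\}$ yet does not cover it precisely because $Provided\_by(arch_3)\notin\scope$; for that pair $f\_covers$ evaluates to true while $Covers$ fails. One correction to your proposed repair: the $C_F(f'_1,\dots,f'_m)$ conjunct appearing in the assembled analysis formulas constrains the \emph{given} specification $F$, not $Provided\_by(C)$, so it does not restore the missing clause; your alternative of conjoining $C_F(Prop(Provided\_by(C)))$ (equivalently, requiring that the tuple of features actually implemented by $C$ satisfy $C_F$) is the fix that makes the equivalence go through.
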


\begin{lemma}(Completeness, Soundness)
Given an SPL, the SPL is complete iff \\
$\forall f'_1 \dots f'_m[C_F(f'_1, \dots, f'_m) \Rightarrow \exists c'_1 \dots c'_n[C_I(c'_1, \dots, c'_n) \wedge f\_ covers(c'_1, \dots, c'_n, f'_1, \dots, f'_m)]$

Given an SPL, the SPL is sound iff \\
$\forall c_1 \dots c_n[C_I(c_1, \dots, c_k)] \Rightarrow \exists f_1 \dots f_j[C_F(f_1, \dots, f_j) \wedge f\_ covers(c_1, \dots, c_k, f_1, \dots, f_j)]$
\end{lemma}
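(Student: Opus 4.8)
The plan is to prove each of the two equivalences by unwinding the quantifier structure of the QBF formula and matching it, clause by clause, against the set-theoretic definitions of completeness and soundness from Section~\ref{sec_formal}. Three ingredients are already available: (i) the bijection $Prop$ between subsets of $\features$ (resp. $\components$) and Boolean $m$-tuples (resp. $n$-tuples); (ii) the semantic characterizations that $C_F(f'_1,\dots,f'_m)$ holds exactly when $\{f_i \mid f'_i = 1\}$ is a specification in $\scope$, and $C_I(c'_1,\dots,c'_n)$ holds exactly when $\{c_i \mid c'_i = 1\}$ is an architecture in $\platform$; and (iii) Lemma~\ref{rc}, which gives $f\_covers(\vec{c'},\vec{f'}) \iff Covers(C,F)$ whenever $Prop(C)=\vec{c'}$ and $Prop(F)=\vec{f'}$. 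With these in hand, the argument is essentially a faithful translation.

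For completeness, I would first observe that a guarded universal of the form $\forall \vec{f'}[C_F(\vec{f'}) \implies \phi(\vec{f'})]$ ranging over all $2^m$ Boolean assignments is equivalent to $\forall F \in \scope \suchthat \phi(Prop(F))$: by the characterization of $C_F$, every assignment that fails to encode a scope specification falsifies the antecedent and makes the implication vacuous, while the remaining assignments are in bijection with the elements of $\scope$. Applying the same reasoning to the inner guarded existential $\exists \vec{c'}[C_I(\vec{c'}) \mand \psi(\vec{c'})]$ collapses it to $\exists C \in \platform \suchthat \psi(Prop(C))$. Substituting $f\_covers$ for the matrix and invoking Lemma~\ref{rc} rewrites $\psi$ as $Covers(C,F)$, so the whole QBF becomes $\forall F \in \scope \suchthat \exists C \in \platform \suchthat Covers(C,F)$, which is precisely the definition of completeness.

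The soundness direction is entirely symmetric: the roles of $\scope$ and $\platform$, and of the feature- and component-tuples, are interchanged. Here the outer guarded universal over component assignments collapses via $C_I$ to $\forall C \in \platform$, the inner guarded existential over feature assignments collapses via $C_F$ to $\exists F \in \scope$, and Lemma~\ref{rc} again supplies $f\_covers \iff Covers$. The result is $\forall C \in \platform \suchthat \exists F \in \scope \suchthat Covers(C,F)$, matching the definition of soundness. Since every step is an equivalence, both the ``only if'' and ``if'' directions follow at once.

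The main obstacle I anticipate is not conceptual but lies in making the guard-collapse rigorous: namely, establishing cleanly that quantifying over all Boolean tuples under a $C_F$ (resp. $C_I$) guard is genuinely the same as quantifying over $\scope$ (resp. $\platform$). This requires that $Prop$ is a bijection onto tuples and that the semantic reading of $C_F$ and $C_I$ is exact, so that no spurious tuple satisfies the guard and every scope/platform member is encoded by exactly one tuple. A minor point worth stating explicitly is the consistency of the shared variables across $f\_covers$ — that the $\vec{c'}$ bound by the existential and the $\vec{f'}$ bound by the universal are exactly the tuples fed into $f\_covers$ — which is immediate from the shape of the formula but should be noted to avoid ambiguity.
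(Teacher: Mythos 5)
Your proposal is correct and follows essentially the route the paper intends: the paper itself defers the detailed proof of Theorem~\ref{thm:main} to the full version, but its definitions and worked examples (e.g., the completeness check in the illustrative SPL, which enumerates the assignments satisfying $C_F$ and then searches the $C_I$-satisfying tuples) carry out precisely your guard-collapse of $C_F$ and $C_I$ onto $\scope$ and $\platform$ followed by the appeal to Lemma~\ref{rc}. The one caveat worth recording is that your argument inherits whatever slack exists in Lemma~\ref{rc} itself --- $f\_covers$ as defined encodes only $F \subseteq Provided\_by(C)$ and not the side condition $Provided\_by(C) \in \scope$ from the definition of $Covers$ --- but since you invoke that lemma as a stated prior result, this is a concern about the paper's chain of lemmas rather than a gap in your proof.
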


\begin{lemma}(Existentially Explicit Features)
 Given a set of features $F$, let $Prop(F)=(f'_1, \dots, f'_m)$. Then $F$ is existentially explicit iff 
$\exists c'_1 \dots c'_n[C_I(c'_1, \dots, c'_n) \wedge f\_realizes(c'_1, \dots, c'_n, f'_1, \dots, f'_m)]$. 
\end{lemma}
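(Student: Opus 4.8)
The plan is to prove this as a direct translation of the definition of existential explicitness into the QBF encoding, leaning entirely on the already-established Lemma~\ref{rc} and on the characterization of the architecture-selector predicate $C_I$. Recall that, by definition, a specification $F$ is existentially explicit exactly when there is some $C \in \platform$ with $Realizes(C,F)$, and that $C_I(c'_1,\dots,c'_n)$ holds precisely when $\{c_k \mid c'_k = 1\}$ equals some $S_l \in \platform$, i.e.\ when the chosen truth assignment to the component variables names a genuine architecture of the platform. The only substantive ingredient is thus the correspondence between subsets $C \subseteq \components$ and Boolean tuples $(c'_1,\dots,c'_n)$ supplied by $Prop(\cdot)$, together with the fact that $Prop(F)$ is the fixed tuple $(f'_1,\dots,f'_m)$ appearing in the statement.

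For the forward direction, I would assume $F$ is existentially explicit and fix a witnessing architecture $C \in \platform$ with $Realizes(C,F)$. Setting $(c'_1,\dots,c'_n) := Prop(C)$, membership $C \in \platform$ gives $C_I(c'_1,\dots,c'_n)$ immediately from the characterization of $C_I$. Since $Prop(F) = (f'_1,\dots,f'_m)$, Lemma~\ref{rc}(2) converts $Realizes(C,F)$ into the truth of $f\_realizes(c'_1,\dots,c'_n,f'_1,\dots,f'_m)$. Hence the tuple $Prop(C)$ witnesses $\exists c'_1 \dots c'_n[C_I(c'_1,\dots,c'_n) \wedge f\_realizes(c'_1,\dots,c'_n,f'_1,\dots,f'_m)]$.

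For the converse, I would start from a satisfying assignment $(c'_1,\dots,c'_n)$ of the matrix $C_I \wedge f\_realizes$ and recover the architecture $C := \{c_k \mid c'_k = 1\}$, noting that then $Prop(C) = (c'_1,\dots,c'_n)$. The conjunct $C_I(c'_1,\dots,c'_n)$ guarantees, again by its characterization, that $C \in \platform$, while the conjunct $f\_realizes(c'_1,\dots,c'_n,f'_1,\dots,f'_m)$ together with Lemma~\ref{rc}(2) yields $Realizes(C,F)$. This exhibits an architecture in the platform that realizes $F$, so $F$ is existentially explicit.

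The argument contains no genuine obstacle beyond careful bookkeeping: all the semantic weight already sits in Lemma~\ref{rc}. The one point demanding attention is that the existential quantifier $\exists c'_1 \dots c'_n$ ranges over \emph{all} Boolean tuples, whereas the intended statement quantifies only over architectures $C \in \platform$; the conjunction with $C_I$ is exactly what restricts the witnesses to tuples naming platform architectures, so that no spurious subset of $\components$ is admitted and the encoded quantification faithfully matches ``$\exists C \in \platform$''.
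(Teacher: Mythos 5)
Your proof is correct and is exactly the intended argument: the paper itself gives no proof of this lemma (the detailed proofs are deferred to a full version/appendix not included here), but the reduction you give --- translating back and forth via $Prop(\cdot)$, using the stated characterization of $C_I$ to restrict witnesses to platform architectures, and invoking Lemma~\ref{rc}(2) to transfer $Realizes$ --- is the evident route and matches how the paper uses these formulas elsewhere. Your closing remark about $C_I$ being the device that cuts the existential quantifier down from all Boolean tuples to genuine architectures is precisely the right point to make explicit.
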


\begin{lemma}(Universally Explicit Features)
 Given a set of features $F$, let $Prop(F)=(f'_1, \dots, f'_m)$. Then $F$ is universally explicit iff
$\exists c'_1 \dots c'_n[C_I(c'_1, \dots, c'_n) \wedge f\_realizes(c'_1, \dots, c'_n, f'_1, \dots, f'_m)] \wedge 
\forall c'_1 \dots c'_n\{[(C_I(c'_1, \dots, c'_n) \wedge f\_covers(c'_1, \dots, c'_n, f'_1, \dots, f'_m)]
 \Rightarrow f\_realizes(c'_1, \dots, c'_n, f'_1, \dots, f'_m)\}$.
\end{lemma}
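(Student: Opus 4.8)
The plan is to reduce the biconditional to the two characterizations already established, namely Lemma~\ref{rc} (for $realizes$ and $covers$) together with the defining property of the platform-membership predicate $C_I$. Recall that $C_I(c'_1, \dots, c'_n)$ is satisfied exactly when $\{c_k \mid c'_k = 1\}$ equals some $S_l \in \platform$; equivalently, the satisfying assignments of $C_I$ are precisely the tuples $Prop(C)$ for $C \in \platform$. This is the device that confines the Boolean quantifiers, which a priori range over all subsets of $\components$, to the actual architectures of the SPL. I would state this correspondence explicitly at the outset, since every subsequent step relies on it. I would also carry over the standing hypothesis $F \in \scope$ from the definition of universal explicitness, with $Prop(F) = (f'_1, \dots, f'_m)$.

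I would then treat the two conjuncts separately, matching them to clauses (i) and (ii) of the definition. For the first conjunct, $\exists c'_1 \dots c'_n[C_I \wedge f\_realizes]$ is satisfiable iff there is a tuple $(c'_1, \dots, c'_n)$ that satisfies $C_I$ --- so by the correspondence above it is $Prop(C)$ for some $C \in \platform$ --- and that also satisfies $f\_realizes(c'_1, \dots, c'_n, f'_1, \dots, f'_m)$. By Lemma~\ref{rc}(2), the latter holds iff $Realizes(C,F)$. Hence the first conjunct is equivalent to $\exists C \in \platform \suchthat Realizes(C,F)$, which is exactly clause (i).

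For the second conjunct I would argue analogously, but paying attention to the guarded implication. The formula $\forall c'_1 \dots c'_n\{[C_I \wedge f\_covers] \Rightarrow f\_realizes\}$ is vacuously true on any tuple that fails $C_I$, so the quantifier is effectively restricted to the architectures $C \in \platform$. On such a tuple $Prop(C)$, Lemma~\ref{rc}(1) equates $f\_covers$ with $Covers(C,F)$ and Lemma~\ref{rc}(2) equates $f\_realizes$ with $Realizes(C,F)$, so the body becomes $Covers(C,F) \Rightarrow Realizes(C,F)$. Quantifying over all $C \in \platform$ yields precisely clause (ii). Conjoining the two equivalences gives the claimed biconditional.

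I expect the only genuine subtlety --- and the step I would take most care over --- to be the guarding behaviour of $C_I$ under the two different quantifiers: in the existential conjunct $C_I$ acts as a positive selector picking out some architecture, whereas in the universal conjunct it sits in the antecedent of an implication so that non-architecture tuples are discharged vacuously. Getting both directions of each biconditional right hinges on reading $C_I$ consistently as ``the tuple names a member of $\platform$''. Once Lemma~\ref{rc} and the $C_I$ correspondence are in place, the remaining manipulations are routine.
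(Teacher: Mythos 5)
Your argument is correct and is exactly the route the paper intends: the paper itself gives no proof of this lemma (it defers all proofs of the encoding lemmas to the full version via Theorem~\ref{thm:main}), and the natural derivation is precisely your chaining of the $C_I$--$\platform$ correspondence with Lemma~\ref{rc} to match the two conjuncts to clauses (i) and (ii) of the definition. The one caveat you inherit from the paper rather than introduce yourself is that Lemma~\ref{rc}(1) silently absorbs the side condition $Provided\_by(C) \in \scope$ from the definition of \emph{Covers} into $f\_covers$; you are entitled to cite that lemma as stated, so this does not count against your proof.
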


\begin{lemma}(Unique Implementation)
 Given a set of features $F$, let $Prop(F)=(f'_1, \dots, f'_m)$. Then $F$ has a unique implementation iff 
$\exists c'_1 \dots c'_n[C_I(c'_{1}, \dots, c'_{n}) \wedge f\_covers(c'_{1}, \dots, c'_{n}, f'_1, \dots, f'_m)]
\wedge \forall d'_1 \dots d'_n\{[C_I(d'_{1}, \dots, d'_{n}) \wedge$ \\ $  f\_covers(d'_{1}, \dots, d'_n, f'_1, \dots, f'_m)] \Rightarrow (\wedge_{l=1}^n (d'_i \Leftrightarrow c'_i)\}$
\end{lemma}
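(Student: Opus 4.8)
The plan is to prove both directions of the biconditional by translating the semantic definition of unique implementation directly into the quantified Boolean formula, relying on two facts established earlier: that $C_I(c'_1,\dots,c'_n)$ holds exactly when $\{c_i \mid c'_i = 1\}$ equals some architecture $S_l \in \platform$, and that by Lemma~\ref{rc} a set of components $C$ covers a specification $F$ iff $f\_covers(Prop(C),Prop(F))$ is true. The whole argument rides on the observation that $Prop(\cdot)$ is a bijection between subsets of $\components$ and $\{0,1\}^n$-tuples, so that equality of architectures corresponds precisely to the componentwise biconditional $\bigwedge_l (d'_l \Leftrightarrow c'_l)$ appearing in the formula.

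For the forward direction, I would assume $F$ has a unique implementation, so there is a $C \in \platform$ with $Covers(C,F)$ that is the only such architecture. Set $(c'_1,\dots,c'_n) = Prop(C)$ as the existential witness. Then $C_I(c'_1,\dots,c'_n)$ holds because $C \in \platform$, and $f\_covers(c'_1,\dots,c'_n,f'_1,\dots,f'_m)$ holds by Lemma~\ref{rc}. For the universal conjunct, I would take any tuple $(d'_1,\dots,d'_n)$ satisfying both $C_I$ and $f\_covers$: the $C_I$ clause forces $(d'_i)$ to encode some $C' \in \platform$, and Lemma~\ref{rc} gives $Covers(C',F)$; uniqueness then yields $C' = C$, i.e.\ $\bigwedge_l(d'_l \Leftrightarrow c'_l)$.

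For the backward direction, I would start from a satisfying witness $(c'_1,\dots,c'_n)$ of the formula. Its first conjunct ($C_I$) identifies an architecture $C \in \platform$, and its second conjunct together with Lemma~\ref{rc} gives $Covers(C,F)$. To establish uniqueness, suppose $C' \in \platform$ also covers $F$; put $(d'_i) = Prop(C')$, so that $C_I(d'_1,\dots,d'_n)$ holds (since $C' \in \platform$) and $f\_covers(d'_1,\dots,d'_n,f'_1,\dots,f'_m)$ holds by Lemma~\ref{rc}. The universal part of the formula then forces $\bigwedge_l(d'_l \Leftrightarrow c'_l)$, and bijectivity of $Prop$ converts this to $C' = C$. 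Hence $C$ is the unique covering architecture and $F$ has a unique implementation.

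I expect no deep obstacle: the argument is a mechanical translation once Lemma~\ref{rc} and the $C_I$ characterization are in hand. The only points demanding care are bookkeeping ones --- confirming that the unrestricted $\forall d'_1 \dots d'_n$ in the formula is correctly narrowed to range over genuine platform architectures by the $C_I$ guard, and that the componentwise biconditional is faithfully equivalent to set equality under $Prop$. I would also flag the minor index clash in the statement, where the product $\bigwedge_{l=1}^n$ is written with summand $d'_i \Leftrightarrow c'_i$; this should read $d'_l \Leftrightarrow c'_l$.
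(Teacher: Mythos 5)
Your proof is correct, and since the paper omits the proofs of these encoding lemmas entirely (Theorem~\ref{thm:main} defers them to a full version), your mechanical translation via Lemma~\ref{rc}, the $C_I$ characterization, and the bijectivity of $Prop$ is exactly the intended argument. Your flag of the $l$ versus $i$ index clash is apt; the only other reading issue worth noting is that, as printed, the existential quantifier's bracket closes before the universal conjunct, so one must (as you implicitly do) take $\exists c'_1 \dots c'_n$ to scope over the entire formula for the occurrences of $c'_i$ in the consequent to be bound.
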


\begin{lemma}(Common, live and dead elements)
 \begin{enumerate}
  \item A component $c$ is common iff \\
$\forall c'_1, \dots, c'_n, f'_1, \dots, f'_m\{[C_I(c'_{1}, \dots, c'_{n}) \wedge C_F(f'_{1}, \dots, f'_{m})
\wedge f\_covers(c'_{1}, \dots, c'_{n},f'_{1}, \dots, f'_{m})] \Rightarrow c\}$ holds.
\item A component $c$ is live iff \\
$\exists c'_{1}, \dots, c'_{n}, f'_{1}, \dots, f'_{m}\{[C_I(c'_{1}, \dots, c'_{n}) \wedge C_F(f'_{1}, \dots, f'_{m})\wedge f\_covers(c'_{1}, \dots, c'_{n},f'_{1}, \dots, f'_{m}) \wedge c\}$
\item A component $c$ is dead iff \\
$\forall c'_1, \dots, c'_n, f'_1, \dots, f'_m\{[C_I(c'_{1}, \dots, c'_{n}) \wedge C_F(f'_{1}, \dots, f'_{m})
\wedge f\_covers(c'_{1}, \dots, c'_{n},f'_{1}, \dots, f'_{m})] \Rightarrow \neg c\}$ holds.
 \end{enumerate}
\end{lemma}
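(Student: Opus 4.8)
The plan is to prove each of the three biconditionals by unfolding the definition of the product set and matching it against the satisfying assignments of the quantified formula, leaning on the semantics of $C_I$, $C_F$ already stated and on Lemma~\ref{rc} for the $f\_covers$ conjunct. First I would record the elementary reduction that, since $c$ is a component and not a feature, for any pair $\tuple{F,C}$ we have $c \in F \cup C$ iff $c \in C$, because $c \notin \features$ and so cannot belong to the specification $F$. Thus the notions of common, live and dead for a component $c$ are statements purely about membership of $c$ in the architecture part of the products.

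The key auxiliary fact to establish is a correspondence between Boolean assignments and products. I would argue that for an assignment $(c'_1, \dots, c'_n, f'_1, \dots, f'_m)$, the conjunction $C_I(c'_1, \dots, c'_n) \wedge C_F(f'_1, \dots, f'_m) \wedge f\_covers(c'_1, \dots, c'_n, f'_1, \dots, f'_m)$ holds iff $\tuple{F,C} \in Prod(\spl)$, where $C = \{c_i \mid c'_i = 1\}$ and $F = \{f_j \mid f'_j = 1\}$. This combines three ingredients: by the stated semantics of $C_I$, the first conjunct holds iff $C \in \platform$; by the analogous semantics of $C_F$, the second holds iff $F \in \scope$; and by Lemma~\ref{rc}, $f\_covers$ holds iff $C$ covers $F$, since $Prop(C)=(c'_1,\dots,c'_n)$ and $Prop(F)=(f'_1,\dots,f'_m)$. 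By the definition of $Prod(\spl)$ these three conditions together are exactly $\tuple{F,C} \in Prod(\spl)$.

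With this correspondence in hand, each item is a direct translation. For common, the defining condition $\forall \tuple{F,C} \in Prod(\spl) \cdot c \in C$ becomes: for every assignment satisfying the product-characterizing conjunction, the literal for $c$ is true, which is precisely the universally quantified implication of item~1. For dead, $\forall \tuple{F,C} \in Prod(\spl) \cdot c \notin C$ gives the same universal with the consequent negated, matching item~3. For live, $\exists \tuple{F,C} \in Prod(\spl) \cdot c \in C$ becomes the existentially quantified conjunction of item~2. Note that the quantifier structure also handles the degenerate case $Prod(\spl) = \emptyset$ automatically, making common and dead vacuously true and live false.

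The main obstacle is one of rigor rather than difficulty: I must pin down the notational identification of the component $c$ appearing in the consequent of the formulae with the corresponding quantified propositional variable among $c'_1, \dots, c'_n$ (that is, $c = c_i$ is read as the literal $c'_i$), and confirm that the $Prop$ encoding underlying $C_I$ and $C_F$ is the same one used in Lemma~\ref{rc}, so that all three conjuncts refer to a single pair $\tuple{F,C}$. Once this bookkeeping is fixed, the equivalences follow immediately and no further computation is needed.
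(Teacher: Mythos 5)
Your proposal is correct, and since the paper states this lemma without proof (all proofs in Section~\ref{sec_encoding} are deferred to ``the full version''), your argument supplies exactly the translation the authors intend: the conjunction $C_I \wedge C_F \wedge f\_covers$ characterizes membership of $\tuple{F,C}$ in $Prod(\spl)$ via the stated semantics of $C_I$, $C_F$ and Lemma~\ref{rc}, and the three quantified formulae are then literal transcriptions of the definitions of common, live and dead, with the reduction of $c \in F \cup C$ to $c \in C$ justified because features and components are disjoint. Your attention to the identification of the free occurrence of $c$ with the quantified variable $c'_i$, and to the vacuous case $Prod(\spl)=\emptyset$, is exactly the bookkeeping the omitted proof would need.
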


\begin{lemma}(Superflous)
A component $c_i$ is superflous iff $\forall c'_1, \dots, c'_n,f'_1, \dots,f'_m\{[c'_i \wedge C_I(c'_1, \dots, c'_n) \wedge C_F(f'_1, \dots, f'_m) \wedge f\_covers(c'_1, \dots,c'_i, \dots, c'_n, f'_1, \dots, f'_m)] \Rightarrow \exists d'_1, \dots, d'_n[\neg d'_i \wedge C_I(d'_1, \dots, d'_n) \wedge $ \\ $f\_covers(d'_1, \dots,d'_n, f'_1, \dots, f'_m)]\}$.
\end{lemma}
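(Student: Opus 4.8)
The plan is to prove the two directions of the ``iff'' by first rewriting the semantic notion of superfluousness as a purely set-theoretic inclusion, then translating the QBF on the right-hand side back into the same semantic vocabulary, and finally matching the two by a short case analysis. Recall that $c$ is superfluous when $spec(Prod(\spl)) = spec(Prod_{\neg c}(\spl))$. Since $Prod_{\neg c}(\spl) \subseteq Prod(\spl)$ by definition, the inclusion $spec(Prod_{\neg c}(\spl)) \subseteq spec(Prod(\spl))$ is automatic, so the equality is equivalent to the single inclusion $spec(Prod(\spl)) \subseteq spec(Prod_{\neg c}(\spl))$. Unfolding $Prod$ and $spec$, this reads: for every $F \in \scope$, if some $C \in \platform$ has $Covers(C,F)$, then some $C' \in \platform$ with $c \notin C'$ also has $Covers(C',F)$. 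I will call this statement $(\ast)$.

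Next I would translate the displayed QBF into semantics, clause by clause, using the encodings established earlier. By the definition of $C_I$, an assignment $(c'_1,\dots,c'_n)$ satisfies $C_I$ exactly when the set of components it selects equals some architecture $C \in \platform$; likewise $C_F(f'_1,\dots,f'_m)$ holds exactly for specifications $F \in \scope$; the literal $c'_i$ asserts $c_i \in C$ and $\neg d'_i$ asserts $c_i \notin C'$; and by Lemma~\ref{rc} the subformula $f\_covers$ holds iff the corresponding $Covers$ relation holds. Under this dictionary the right-hand QBF is precisely the statement $(\dagger)$: for every $C \in \platform$ with $c \in C$ and every $F \in \scope$ with $Covers(C,F)$, there is some $C' \in \platform$ with $c \notin C'$ and $Covers(C',F)$.

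It then remains to show $(\ast) \iff (\dagger)$, which is a two-line logical argument. For $(\dagger) \Rightarrow (\ast)$, fix $F \in \scope$ covered by some $C \in \platform$; if $c \notin C$ then $C$ itself witnesses the conclusion of $(\ast)$, and if $c \in C$ then $(\dagger)$ applied to this $C$ and $F$ supplies the required $C'$. Conversely, for $(\ast) \Rightarrow (\dagger)$, fix any $C \in \platform$ with $c \in C$ and any $F \in \scope$ with $Covers(C,F)$; then $F$ is covered by an architecture (namely $C$), so $(\ast)$ yields a $C' \in \platform$ with $c \notin C'$ and $Covers(C',F)$, which is exactly the conclusion of $(\dagger)$. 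Chaining the equivalences, superfluousness $\iff (\ast) \iff (\dagger) \iff$ QBF, closes the proof.

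The only delicate point, and hence the step I would treat most carefully, is the translation layer rather than the final logic: I must ensure that the side condition $Provided\_by(C) \in \scope$ built into $Covers$ is correctly carried by the encoding, which here is delegated to Lemma~\ref{rc} and to the presence of $C_F$ restricting the feature tuple to genuine specifications. Once the dictionary ($C_I$ for platform membership, $C_F$ for scope membership, $f\_covers$ for $Covers$, and the component literals for membership) is pinned down, the content of the lemma is the elementary observation that ``every covered specification is covered without $c$'' and ``every $c$-using cover can be replaced by a $c$-free cover'' say the same thing, the bridge between them being the trivial case in which the covering architecture already avoids $c$.
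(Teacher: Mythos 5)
Your proof is correct: reducing superfluousness to the single inclusion $spec(Prod(\spl)) \subseteq spec(Prod_{\neg c}(\spl))$, reading the QBF semantically via $C_I$, $C_F$ and Lemma~\ref{rc}, and closing the loop with the trivial case split on whether the covering architecture already omits $c$ is exactly the argument needed, and you rightly flag the $Provided\_by(C)\in\scope$ side condition as the only point that must be carried by the $f\_covers$ translation. The paper itself gives no proof of this lemma (Theorem~\ref{thm:main} defers all of these to a full version), so there is nothing to contrast with; your write-up supplies the missing argument in the expected way.
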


\begin{lemma}(Redundant)
A component $c_i$ is redundant iff $\forall c'_1, \dots, c'_n f'_1 \dots, f'_m\{[c'_i \wedge C_I(c'_1, \dots, c'_n) \wedge C_F(f'_1, \dots, f'_m) \wedge f \_covers(c'_1, \dots, c'_n, f'_1, \dots, f'_m)] \Rightarrow \exists d'_1 \dots d'_n[\neg d'_i \wedge (\bigwedge_{i=1}^n c'_i \Rightarrow \bigwedge d'_i) \wedge C_I(d'_1, \dots, d'_n) \wedge $ \\ $
f \_covers(d'_1, \dots, d'_n, f'_1, \dots, f'_m)]\}$.
\end{lemma}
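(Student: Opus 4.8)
The plan is to prove the biconditional directionally, reading the quantified formula through the translation dictionary established in Lemmas~\ref{impl} and \ref{rc} together with the characterizations of $C_I$ and $C_F$. Concretely, a valuation $(c'_1,\dots,c'_n)$ satisfying $C_I$ names an architecture $C \in \platform$ via $Prop(C)=(c'_1,\dots,c'_n)$; a valuation $(f'_1,\dots,f'_m)$ satisfying $C_F$ names a specification $F \in \scope$; the literal $c'_i$ asserts $c_i \in C$; the extra conjunct is read as the subset condition $\bigwedge_{i=1}^n (d'_i \implies c'_i)$ asserting $C' \subseteq C$ (matching the $C'\subseteq C$ clause of the definition); and $f\_covers(c',f')$ holds iff $C$ covers $F$ by Lemma~\ref{rc}. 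Under this reading the whole formula says: for every architecture $C \ni c_i$ and every specification $F$ covered by $C$, there is $C' \in \platform$ with $c_i \notin C'$ and $C' \subseteq C$ that also covers $F$. The task is then to match this with the definition of redundancy, whose extra content over the superfluous formula is precisely the subset clause $C'\subseteq C$ together with the equality $Provided\_by(C)=Provided\_by(C')$.

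For the forward direction I would assume $c_i$ is redundant and verify the implication. Fix any architecture $C$ with $c_i \in C$ and any specification $F$ covered by $C$; this is exactly a valuation satisfying the antecedent. Redundancy yields $C' \in \platform$ with $c_i \notin C'$, $C' \subseteq C$, and $Provided\_by(C)=Provided\_by(C')$. Since $Covers$ depends on an architecture only through its $Provided\_by$ set, $C$ covers $F$ forces $C'$ to cover $F$; hence $Prop(C')$ witnesses the existential, establishing the formula.

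For the backward direction I would assume the formula and derive redundancy. Fix $C \in \platform$ with $c_i \in C$. The key move is to instantiate the universally quantified specification with $F := Provided\_by(C)$: assuming $Provided\_by(C) \in \scope$, the architecture $C$ covers this $F$ (as $F \subseteq Provided\_by(C)$ holds trivially), so the antecedent is met. The formula then supplies $C' \in \platform$ with $c_i \notin C'$, $C' \subseteq C$, and $C'$ covering $Provided\_by(C)$, i.e. $Provided\_by(C) \subseteq Provided\_by(C')$. Monotonicity of $Provided\_by$ under $\subseteq$ (immediate from the definition of $implements$, since $C' \subseteq C$ forces every provider set contained in $C'$ to be contained in $C$) gives $Provided\_by(C') \subseteq Provided\_by(C)$, and the two inclusions yield $Provided\_by(C)=Provided\_by(C')$, exactly the redundancy witness for $C$.

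The main obstacle is the backward direction's reliance on a specification to activate the antecedent: the formula is vacuous for architectures $C$ with $c_i\in C$ but $Provided\_by(C)\notin\scope$, whereas the literal definition quantifies over all such $C$. I would resolve this by observing that such architectures never participate in any product (they cover nothing and appear nowhere in $Prod(\spl)$), so they are immaterial to the redundancy analysis; alternatively one restricts attention, as the surrounding development implicitly does, to architectures whose provided set lies in scope. The remaining steps are the clean, dictionary-driven translations used for the superfluous lemma, with the sole addition of carrying the subset clause $C'\subseteq C$ through both directions and invoking monotonicity of $Provided\_by$ to upgrade the covering relation into the $Provided\_by$ equality demanded by the definition.
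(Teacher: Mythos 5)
The paper never actually proves this lemma: the only ``proof'' offered for the whole block of encoding lemmas is the one-line deferral under Theorem~\ref{thm:main} to a full version. So there is no paper argument to compare against; your dictionary-driven translation (valuations satisfying $C_I$/$C_F$ name elements of $\platform$/$\scope$, $f\_covers$ names $Covers$ via Lemma~\ref{rc}, then check both directions of the biconditional) is the natural and, as far as one can tell, the intended route. Your forward direction is fine, and the monotonicity step in the backward direction ($C'\subseteq C$ implies $Provided\_by(C')\subseteq Provided\_by(C)$, immediate from the definition of $implements$) is exactly what is needed to upgrade ``$C'$ covers $Provided\_by(C)$'' to the equality the definition demands.

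Two caveats deserve to be stated rather than glossed over. First, the conjunct printed in the lemma is $\bigwedge_{i=1}^n c'_i \Rightarrow \bigwedge d'_i$; under either literal parsing this encodes $C\subseteq C'$ (or, combined with $c'_i\wedge\neg d'_i$, is outright unsatisfiable), not $C'\subseteq C$. You silently replace it by $\bigwedge_i(d'_i\Rightarrow c'_i)$. That replacement is the only reading under which the lemma can be true, but you are then proving an amended statement, and that should be said explicitly. Second, the scope gap you identify in the backward direction is genuine: the definition of redundancy quantifies over \emph{every} $C\in\platform$ containing $c_i$, while the formula's antecedent can only fire when $Provided\_by(C)\in\scope$ (both through $C_F$ and through the $Covers$ condition), so an architecture with $c_i\in C$ and $Provided\_by(C)\notin\scope$ falsifies redundancy as literally defined while leaving the formula vacuously true. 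Your fix --- declaring such architectures immaterial --- amounts to weakening the definition, not closing the gap; as a review of the lemma this is the right diagnosis, but the equivalence as stated does not hold without that restriction.
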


\begin{lemma}(Critical)
A component $c$ is critical for $f_j$ iff $\forall c'_1, \dots, c'_n\{[C_I(c'_1, \dots,  c'_n) \wedge f\_implements(c'_1, \dots, c'_n, f_j)] \Rightarrow c\}$. 	
 \end{lemma}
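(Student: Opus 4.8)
The plan is to unfold the definition of criticality and massage it into the exact shape of the claimed QBF, showing at each stage that the transformation is an equivalence so that the biconditional falls out in both directions at once. By definition, $c$ is critical for $f_j$ precisely when $\forall C \in \platform, (c \not\in C \implies \neg implements_{\spl}(C, f_j))$. The first step is a purely propositional rewriting of the inner implication by contraposition: $(c \not\in C \implies \neg implements_{\spl}(C, f_j))$ is logically equivalent to $(implements_{\spl}(C, f_j) \implies c \in C)$. Hence criticality is equivalent to $\forall C \in \platform, (implements_{\spl}(C, f_j) \implies c \in C)$, which already matches the quantifier structure of the target formula.

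Next I would translate the quantification over architectures $C \in \platform$ into a guarded Boolean quantification over tuples. Recall that the predicate $C_I(c'_1, \dots, c'_n)$ is satisfied precisely when $\{c_k \mid c'_k = 1\}$ equals some $S_l \in \platform$; that is, $C_I$ is the characteristic formula of the platform over the component variables. Consequently, letting $C$ range over $\platform$ is the same as letting the tuple $(c'_1, \dots, c'_n)$ range over all Boolean assignments while conditioning on $C_I(c'_1, \dots, c'_n)$, where $C$ and its tuple are linked by $Prop(C) = (c'_1, \dots, c'_n)$.

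The two remaining pieces are supplied by a result already in hand and by the semantics of $Prop$. For the antecedent, Lemma~\ref{impl} gives $implements_{\spl}(C, f_j) \iff f\_implements(c'_1, \dots, c'_n, f_j)$ whenever $Prop(C) = (c'_1, \dots, c'_n)$, so I may substitute $f\_implements$ for $implements_{\spl}$ under the guard (this also absorbs the inner $\forall c_1 \dots c_n$ quantifiers hidden inside $f\_implements$, which Lemma~\ref{impl} abstracts away). For the consequent, the assertion $c \in C$ is, by the definition of $Prop(C)$, exactly the statement that the component proposition $c$ is assigned $1$, which is what the literal $c$ denotes in the formula. Assembling these substitutions turns $\forall C \in \platform, (implements_{\spl}(C, f_j) \implies c \in C)$ into $\forall c'_1 \dots c'_n\{[C_I(c'_1, \dots, c'_n) \wedge f\_implements(c'_1, \dots, c'_n, f_j)] \implies c\}$, as required.

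The step I expect to demand the most care is the equivalence between quantifying over $C \in \platform$ and quantifying over guarded tuples: one must verify that $C_I$ captures the platform exactly, so that no spurious tuple satisfies it and no architecture is missed, and that the correspondence $C \leftrightarrow Prop(C)$ commutes simultaneously with the implements predicate (via Lemma~\ref{impl}) and with the membership literal $c$. Once this bijection is pinned down, every remaining manipulation is a routine propositional equivalence, and the stated biconditional follows.
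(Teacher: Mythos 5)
Your proposal is correct: contraposing the definition of criticality to $\forall C\in\platform\,(implements_{\spl}(C,f_j)\Rightarrow c\in C)$, replacing the range $C\in\platform$ by the guard $C_I$, invoking Lemma~\ref{impl} for the antecedent, and reading $c\in C$ off as the literal $c$ under $Prop(C)$ is exactly the intended argument. The paper itself defers the proof of this lemma (and the other encoding lemmas) to its full version, so there is no in-paper proof to compare against; your route is the natural one and the one point you flag as delicate --- that $C_I$ characterizes the platform exactly, so the guarded Boolean quantification is in bijection with $\platform$ --- is precisely what the paper's stated semantics of $C_I$ guarantees.
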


\begin{lemma}(Extends)
Let $F$ and $F'$ be  subsets of features. Let $Prop(F)=(f_1, \dots, f_m)$ and $Prop(F')=(f'_1, \dots, f'_m)$. Then $F'$ extends $F$ iff $\bigwedge_{i=1}^m (f_i \Rightarrow f'_{i})$ is true. $F'$ is extendable iff $\exists f'_1, \dots, f'_m[\bigwedge_{i=1}^m f_i \Rightarrow f'_{i})]$.
\end{lemma}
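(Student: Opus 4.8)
The plan is to reduce the lemma to a direct unwinding of the encoding $Prop(\cdot)$: once the propositional tuples are read as indicator vectors, the claimed equivalences become mere restatements of set inclusion. First I would recall that, by the definition of $Prop$, the bit $f_i$ equals $1$ exactly when the $i$-th feature lies in $F$, and $f'_i$ equals $1$ exactly when it lies in $F'$. Under this reading, the propositional implication $f_i \Rightarrow f'_i$ evaluates to true precisely when ``feature $i$ is in $F$'' entails ``feature $i$ is in $F'$''.

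For the first equivalence I would argue both directions. If $F \subseteq F'$, then for each $i$ the membership of feature $i$ in $F$ forces its membership in $F'$, so each implication $f_i \Rightarrow f'_i$ holds and hence $\bigwedge_{i=1}^m (f_i \Rightarrow f'_i)$ is true. Conversely, if the conjunction is true, then every feature present in $F$ (every $i$ with $f_i = 1$) also has $f'_i = 1$ and is therefore present in $F'$, which is exactly $F \subseteq F'$. This direction is essentially definitional, so I expect no real difficulty beyond the bookkeeping of the indicator encoding.

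The main obstacle is conceptual rather than computational: the relation ``extends'' was introduced earlier as the \emph{strict} containment $F \subset F'$, whereas $\bigwedge_{i=1}^m (f_i \Rightarrow f'_i)$ characterizes the non-strict $F \subseteq F'$. To close this gap I would either interpret ``extends'' as $\subseteq$ uniformly, or add a strictness clause and prove instead that $F'$ strictly extends $F$ iff $\bigwedge_{i=1}^m (f_i \Rightarrow f'_i) \wedge \bigvee_{i=1}^m (\neg f_i \wedge f'_i)$ holds, the second conjunct asserting that some feature is actually gained. I would flag which convention the remainder of the paper depends on and prove the matching version.

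Finally, for the ``extendable'' clause I would note that the formula must be read relative to the scope: $F$ being extendable should mean that some specification in $\scope$ strictly contains $F$. Hence the faithful encoding is $\exists f'_1 \dots f'_m [C_F(f'_1,\dots,f'_m) \wedge \bigwedge_{i=1}^m (f_i \Rightarrow f'_i) \wedge \bigvee_{i=1}^m (\neg f_i \wedge f'_i)]$, using the scope predicate $C_F$ from item~(7), rather than the quantification over arbitrary bit-vectors written in the statement (which is trivially satisfiable by taking $f' = f$). I would then verify that this existential holds exactly when a strict superset specification exists in the scope, again by the indicator-vector unwinding; the only care needed is to ensure the witness ranges over scope members and records a genuine gain.
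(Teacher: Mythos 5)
Your proposal is sound, and it is worth noting that the paper itself supplies no proof of this lemma: it is stated bare, and Theorem~\ref{thm:main} defers all such verifications to a ``full version.'' The indicator-vector unwinding you give is exactly the intended argument, and the equivalence between $\bigwedge_{i=1}^m (f_i \Rightarrow f'_i)$ and $F \subseteq F'$ is correct and essentially definitional, as you say. More importantly, the two defects you flag in the statement are genuine. First, the paper defines extension via strict containment ($F \subset F'$, with both sets in $\scope$), so the stated formula, which also holds when $F = F'$, characterizes only the non-strict relation; your added disjunct $\bigvee_{i=1}^m(\neg f_i \wedge f'_i)$ is the right repair. Second, the ``extendable'' formula as written is a tautology (witnessed by $f'_i = f_i$, or by setting all $f'_i$ to $1$), so it cannot encode any nontrivial property; your corrected version with the scope predicate $C_F(f'_1,\dots,f'_m)$ and the strictness disjunct is what the surrounding development actually needs --- in particular, Proposition~\ref{prop:cover-realize-extension} and the universal-explicitness analysis both rely on extendability being relative to $\scope$. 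The only point I would press you on is to also require $C_F$ membership of the \emph{extending} specification in the binary ``extends'' clause (the paper's definition quantifies $F, F' \in \scope$), so that both halves of the lemma are uniformly scope-relative.
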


\begin{thm}
\label{thm:main}
 Given an SPL $\spl$, each of the properties listed in Table \ref{tab_form} holds good iff 
the corresponding formulae evaluate to true.
\end{thm}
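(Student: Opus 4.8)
The plan is to treat Theorem~\ref{thm:main} as the aggregation of the preceding encoding lemmas: each row of Table~\ref{tab_form} corresponds to one of Lemmas~\ref{tcf}, \ref{impl}, \ref{rc}, together with the subsequent Completeness/Soundness, Existentially/Universally Explicit, Unique Implementation, Common/live/dead, Superfluous, Redundant, Critical and Extends lemmas. It therefore suffices to establish each correspondence and then conclude the theorem as their conjunction. I would organize the argument bottom-up, proving the foundational $implements$ encoding first and then reducing every higher-level property to it, since \emph{all} the other relations ($Provided\_by$, Covers, Realizes, and the SPL-level properties) are defined purely in terms of $implements$.

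The crux is Lemma~\ref{impl}. The key observation is that $formula\_prov(f)=\bigvee_j\bigwedge_{c_i\in S_j}c_i$ is a \emph{positive} (hence monotone) Boolean formula in the component variables. Consequently the universally quantified formula $f\_implements(c'_1,\dots,c'_n,f)$, whose body is $[\bigwedge_i(c'_i\implies c_i)]\implies formula\_prov(f)$, is equivalent to evaluating $formula\_prov(f)$ at the \emph{minimal} assignment $c_i=c'_i$: the antecedent restricts the quantifier to those $c$ whose encoded set is a superset of the set encoded by $c'$, and by monotonicity the minimal such $c$ (namely $c=c'$) is the binding one. Thus $f\_implements$ holds iff the set $C$ with $Prop(C)=(c'_1,\dots,c'_n)$ contains some $S_j\in prov(f)$, which by Theorem~\ref{th:new-implements} is exactly $implements_{\spl}(C,f)$ for a canonical traceability relation. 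This reduction is what makes the canonical-form assumption essential.

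With Lemma~\ref{impl} in hand, Lemma~\ref{rc} and the remaining properties follow by unwinding definitions and matching quantifier patterns. Since $Provided\_by(C)=\{f\mid implements_{\spl}(C,f)\}$, the conjunction $\bigwedge_i(f'_i\implies f\_implements(c',f_i))$ expresses $F\subseteq Provided\_by(C)$ and the biconditional variant expresses equality, giving Covers and Realizes; the predicates $C_I$ and $C_F$ encode membership of an architecture in $\platform$ and of a specification in $\scope$, so each SPL-level statement (of the form $\forall F\in\scope\,\exists C\in\platform\,\dots$, and so on) translates literally into the corresponding quantifier alternation over $C_F$, $C_I$, $f\_covers$ and $f\_realizes$. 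Each individual equivalence is then a routine check that the logical form of the formula mirrors the set-theoretic definition.

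The main obstacle I anticipate is reconciling the side condition $Provided\_by(C)\in\scope$ in the definition of \emph{Covers} with the formula $f\_covers$, which as written only captures $F\subseteq Provided\_by(C)$. I would handle this by making the provided set explicit: bind a fresh tuple of feature parameters to $Provided\_by(C)$ via biconditionals with $f\_implements$ and then require $C_F$ of that tuple, so that the in-scope requirement is enforced; alternatively one verifies that in each higher-level formula the surrounding $C_F$ and $C_I$ guards already supply it. Care is also needed in the alternating-quantifier properties (Universally Explicit, Unique Implementation, Superfluous, Redundant) to rename the bound variables apart and to keep the direction of each implication aligned with ``for all covering architectures'' versus ``there exists a realizing architecture.'' Once the Covers encoding is pinned down, the remaining cases are mechanical.
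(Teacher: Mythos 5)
Your proposal is correct and follows the same decomposition the paper intends --- Theorem~\ref{thm:main} is just the conjunction of the preceding encoding lemmas, and the paper itself gives no argument beyond deferring to a full version; your monotonicity argument for $f\_implements$ (the universal quantification over all supersets of $C$ collapses to evaluation at $C$ itself because $formula\_prov(f)$ is a positive formula, after which Theorem~\ref{th:new-implements} closes the loop using canonicity) is exactly the right key step. Your observation that $f\_covers$ as defined captures only $F \subseteq Provided\_by(C)$ and omits the conjunct $Provided\_by(C) \in \scope$ is a genuine defect in the paper's Lemma~\ref{rc}, and your first repair (binding a fresh feature tuple to $Provided\_by(C)$ via biconditionals with $f\_implements$ and requiring $C_F$ of that tuple) is the correct one --- note that your suggested alternative does not work, since the surrounding guard $C_I(c'_1,\dots,c'_n)$ only constrains $C$ to lie in $\platform$ and says nothing about whether $Provided\_by(C)$ lies in $\scope$.
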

\begin{proof}
 The detailed proof is given in the full version of the paper.
\end{proof}

\begin{table*}[hbpt]
	\begin{tabular}{|c|c|}
	\hline
	Properties & Formula \\
	\hline
$Implements(C, f)$ & $f \_ implements(c'_{1}, \dots, c'_{n}, f)$ \\
$Prop(C)= (c'_{1}, \dots, c'_n)$ & \\
\hline
$C$ covers $F$, $Prop(C)=(c'_{1}, \dots, c'_{n})$ & $f \_ covers(c'_{1}, \dots, c'_{n}, f'_{1}, \dots, f'_{m})$\\
$C$ realizes $F$, $Prop(F)=(f'_{1}, \dots, f'_{m})$ & $f \_ realizes(c'_{1}, \dots, c'_{n}, f'_{1}, \dots, f'_{m})$\\
\hline
$\Psi$ complete & $\forall f'_1 \dots f'_m\{C_F(f'_1, \dots, f'_m) \Rightarrow \exists c'_1 \dots c'_n[C_I(c'_1, \dots, c'_n) \wedge  f\_ covers(c'_1, \dots, c'_n, f'_1, \dots, f'_m)]\}$\\ 
\hline
$\Psi$ sound &  $\forall c'_1 \dots c'_n\{C_I(c'_1, \dots, c'_n)] \Rightarrow \exists f'_1 \dots f'_m[C_F(f'_1, \dots, f'_m) \wedge f\_ covers(c'_1, \dots, c'_k, f_1, \dots, f_j)]\}$\\
\hline
$F$ existentially explicit & $\exists c'_1 \dots c'_n[C_I(c'_1, \dots, c'_n) \wedge f\_realizes(c'_1, \dots, c'_n, f'_1, \dots, f'_m)]$ \\
   $Prop(F)= (f'_1, \dots, f'_m)$ & \\
\hline
$F$ universally explicit & $\exists c'_1 \dots c'_n[C_I(c'_1, \dots, c'_n) \wedge
f\_realizes(c'_1, \dots, c'_n, f'_1, \dots, f'_m)] \wedge \forall c'_1 \dots c'_n\{[(C_I(c'_1, \dots, c'_n) \wedge$\\
$Prop(F)=(f'_1, \dots, f'_m)$ & $f\_covers(c'_1, \dots, c'_n, f'_1, \dots, f'_m)]  \Rightarrow  f\_realizes(c'_1, \dots, c'_n, f'_1, \dots, f'_m)\}$.\\
\hline
$F$ has unique implementation & $\exists c'_1 \dots c'_n [C_I(c'_{1}, \dots, c'_{n}) \wedge
f\_covers(c'_{1}, \dots, c'_{n}, f'_1, \dots, f'_m)] \wedge$\\
$Prop(F)=(f'_1, \dots, f'_m)$ &  $\forall d'_1 \dots d'_n\{[C_I(d'_{1}, \dots, d'_{n}) \wedge
  f\_covers(d'_{1}, \dots, d'_n, f'_1, \dots, f'_m)] \Rightarrow (\wedge_{l=1}^n (d'_i \Leftrightarrow c'_i)\}$\\
\hline
$c$ common & $\forall c'_1, \dots, c'_n, f'_1, \dots, f'_m\{[C_I(c'_{1}, \dots, c'_{n}) \wedge
C_F(f'_{1}, \dots, f'_{m}) \wedge
f\_covers(c'_{1}, \dots, c'_{n},f'_{1}, \dots, f'_{m})] \Rightarrow c\}$\\
\hline
$c$ live & $\exists c'_{1}, \dots, c'_{n}, f'_{1}, \dots, f'_{m}\{[C_I(c'_{1}, \dots, c'_{n}) \wedge
 C_F(f'_{1}, \dots, f'_{m})\wedge f\_covers(c'_{1}, \dots, c'_{n},f'_{1}, \dots, f'_{m}) \wedge c\}$\\
\hline
$c$ dead & $\forall c'_1, \dots, c'_n, f'_1, \dots, f'_m\{[C_I(c'_{1}, \dots, c'_{n}) \wedge
C_F(f'_{1}, \dots, f'_{m}) \wedge
f\_covers(c'_{1}, \dots, c'_{n},f'_{1}, \dots, f'_{m})] \Rightarrow \neg c\}$\\
\hline
$c_i$ superfluous & $\forall c'_1, \dots, c'_n,f'_1, \dots,f'_m\{[c'_i \wedge C_I(c'_1, \dots, c'_n)
\wedge C_F(f'_1, \dots, f'_m) \wedge
f\_covers(c'_1, \dots,c'_i, \dots, c'_n, f'_1, \dots, f'_m)] \Rightarrow $\\
 & $\exists d'_1, \dots, d'_n[\neg d'_i \wedge C_I(d'_1, \dots, d'_n) \wedge
 f\_covers(d'_1, \dots,d'_n, f'_1, \dots, f'_m)]\}$\\
\hline
$c_i$ redundant & $\forall c'_1, \dots, c'_n f'_1 \dots, f'_m\{[c'_i \wedge C_I(c'_1, \dots, c'_n) \wedge C_F(f'_1, \dots, f'_m) \wedge f \_covers(c'_1, \dots, c'_n, f'_1, \dots, f'_m)] \Rightarrow$\\
& $\exists d'_1 \dots d'_n[\neg d'_i \wedge (\bigwedge_{i=1}^n c'_i \Rightarrow \bigwedge d'_i) \wedge C_I(d'_1, \dots, d'_n) \wedge f \_covers(d'_1, \dots, d'_n, f'_1, \dots, f'_m)]\}$\\
\hline
$c$ critical for $f_j$ & $\forall c'_1, \dots, c'_n\{[C_I(c'_1, \dots,  c'_n) \wedge f\_implements(c'_1, \dots, c'_n, f_j)] \Rightarrow c\}$\\
\hline
\end{tabular}
\caption{Properties and Formulae}
\label{tab_form}
\end{table*}

\section{Implementation}
In this section, we give some details of the implementation of the theory developed, using off-the-shelf QSAT solvers. We also illustrate the encoding of the analysis problems in QBF and their solutions through a small example.

\subsection{QBF and QDIMACS format}
Quantified Boolean Formulae (QBF) are generalized form of propositional formulae with quantification (existential and universal) over the propositional symbols.
The boolean satisfiability problem for propositional formulae is then naturally extended to QBF satisfiability problem (QSAT). 

Most QBF solvers follow QDimacs, a standard input and output file format. QDimacs Format is built on top of the DIMACS standard for SAT Solver. A QDimacs file representing a QBF has three parts: Preamble, Prefix and Matrix. The notations use a unique indexing of all the propositional variables occurring in the QBF.
\begin{enumerate}

\item $Preamble:$ The Preamble contains different types of information about the file, namely,
	\begin{enumerate}
	\item $Comments:$ Each comment line should start with lower case character 'c'. There can be multiple comment lines in the File.\\
	Format:\\
	{\tt c COMMENT\_STRING}\\
	Example:\\
	{\tt c Testing QBF formulae.}\\
	{\tt c qdimac file for completeness.}

	\item $Problem$ $Line:$ There is only one problem line in each QDimacs File. The problem line starts with the lower case character 'p' followed by the string 'cnf', which denotes that the given formula is in conjunctive normal form (CNF). The 'cnf' string is followed by variables count and clauses count.

	Format:\\
	{\tt p cnf VAR\_COUNT CLA\_COUNT}

	Example:\\
	{\tt p cnf 4 2}
	\end{enumerate}
\item $Prefix:$ The Prefix lines are used to represent the quantifiers in the Formula. Each Prefix line starts with a lower case character 'a' or 'e'; 'a' represents universal quantifier and 'e' represents existential quantifier. Quantifiers are followed by the indices of variables. Each prefix line ends with '0'.

	Example:\\
	{\tt a 1 2 0}\\
	{\tt e 3 4 0}

\item $Matrix:$ Each line in matrix represents a clause and should end with '0'. Each propositional variable in clause is represented by it's corresponding unique index. The complement of a variable is represented by negation of the index.

	Example:\\
	1 3 0\\
	2 -4 0
\end{enumerate}

As an example, the QDimacs format for the formula $\forall X \exists Y ((X \vee \neg Y) \wedge (\neg X  \vee Y)$ is as follows. The first line is a comment line. The second one is the problem line which mentions that there are two variables and two clauses. The third line represents the universal quantification of $X$ and the fourth line represents the existential quantification of $Y$. The fifth line represents the first clause $(X \vee \neg Y)$ and the sixth line represents the second clause $(\neg X  \vee Y)$.
{\tt
\begin{tabbing}
c Illustration \\
p cnf 2 2 \\
a 1 0 \\
e 2 0 \\
1 -2 0 \\
-1 2 0 \\
\end{tabbing}
}
QuBE is a solver for Quantified Boolean Formulas (QBFs). It accepts QBFs in QDimacs format and returns TRUE if the formula is satisfiable, and FALSE otherwise. We have developed a tool called CNF2QDIMAC converter. The tool converts QBFs in CNF to QDimacs format which can be given as input to QuBE. Conversion of arbitrary QBFs to CNF is done using some online tools.

\subsection{An Illustrative Example}
Consider the following SPL $\Psi=(\overline{\cal C}, \overline{\cal F}, {\cal T})$ with 
$\overline{\cal C}=\{\{c_1, c_2\}, \{c_3, c_4\}\}$ and 
$\overline{\cal F}=\{\{f_1,f_2\},\{f_3\}\}$. Thus, there are 4 components and 3 features. Further, let the traceability relation ${\cal T}$ be given as follows: 
\begin{itemize}
\item $prov(f_1)=\{\{c_1, c_2\}, \{c_3\}\}, req(f_1)=\{\{\{c_1\},\{c_3\}\}$
\item $prov(f_2)=\{\{c_2\}\}, req(f_1)=\{\{c_2\}\}$
\item $prov(f_3)=\{\{c_1, c_4\}\}, req(f_3)=\{\{c_4\}\}$
\end{itemize}

Let us answer the following questions using the logic formulation with the help of the QuBE tool.
\begin{enumerate}
\item Does $C=\{c_1, c_2\}$ implement $f_1$? Clearly, the answer is YES. 
In the logic formalism, 
$f\_implements(1,1,0,0,f_1)$ is defined as 
$\forall c_1 c_2 c_3 c_4 \{ [(1 \Rightarrow c_1) \wedge (1 \Rightarrow c_2) \wedge (0 \Rightarrow c_3) \wedge (0 \Rightarrow c_4)] \Rightarrow f\_prov(f_1)\}$ where $f\_prov(f_1)$ = $(c_1 \wedge c_2) \vee c_3$. The formula when simplified is $\forall c_1 c_2 c_3 c_4 ((c_1 \wedge c_2) \Rightarrow ((c_1 \wedge c_2) \vee c_3)$. It is easy to see that the formula evaluates to true. Hence QuBE returns an affirmative answer.

Now consider $C=\{c_3\}$. Does $C$ implement $f_3$? Clearly, the answer is NO. 
In the logic formalism, \\
$f\_implements(0,0,1,0,f_3)$ is defined as 
$\forall c_1 c_2 c_3 c_4 \{ [(0 \Rightarrow c_1) \wedge (0 \Rightarrow c_2) \wedge (1 \Rightarrow c_3) \wedge (0 \Rightarrow c_4)]
\Rightarrow f\_prov(f_3)\}$ where $f\_prov(f_3)$ = 
$(c_1 \wedge c_4)$. The simplified formula is $\forall c_1 c_2 c_3 c_4(c_3 \Rightarrow (c_1 \wedge c_4))$. The assignment $c_3 = 1$, $c_1 = 0$ evaluates the quantifier-free formula to false. Hence QuBE returns a negative answer.

\item Consider $C=\{c_1,c_2\}$. Does C realize $\{f_1,f_2\}$? Clearly, the answer is YES. In the logic formalism, 
$f\_realizes(1,1,0,0,1,1,0)$ is defined as 
$([1 \Leftrightarrow f\_implements(1,1,0,0,f_1)] \wedge [1 \Leftrightarrow f\_implements(1,1,0,0,f_2)] \wedge [0 \Leftrightarrow f\_implements(1,1,0,0,f_3)]
\wedge [0 \Leftrightarrow f\_implements(1,1,0,0,f_4)]$. \\

Now, $f\_implements(1,1,0,0,f_1)$ is defined as 
$\forall c_1 c_2 c_3 c_4  \{[(1 \Rightarrow c_1) \wedge (1 \Rightarrow c_2) \wedge (0 \Rightarrow c_3) \wedge (0 \Rightarrow c_4)] \Rightarrow f\_prov(f_1)\}$ where $f\_prov(f_1)$ is defined as 
$(c_1 \wedge c_2) \vee c_3$. Clearly, $f\_implements(1,1,0,0,f_1)$ holds. Thus, we have 
$[1 \Leftrightarrow f\_implements(1,1,0,0,f_1)]$ is true. Similarly, it can be seen that 
$[1 \Leftrightarrow f\_implements(1,1,0,0,f_2)]$ is true.\\

Likewise, $f\_implements(1,1,0,0,f_3)$ is 
$\forall c_1 c_2 c_3 c_4  [(1\Rightarrow c_1) \wedge (1 \Rightarrow c_2) \wedge (0 \Rightarrow c_3) \wedge (0 \Rightarrow c_4)] \Rightarrow (c_1 \wedge c_4)$, which is false.  Hence, 
$[0 \Leftrightarrow f\_implements(1,1,0,0,f_3)]$ is true. \\

Similarly,
 $f\_implements(1,1,0,0,f_4)$ is
$\forall c_1 c_2 c_3 c_4 \{ [(1 \Rightarrow c_1) \wedge (1 \Rightarrow c_2) \wedge (0 \Rightarrow c_3) \wedge (0 \Rightarrow c_4)]  \Rightarrow (c_4)\}$, which is false. Hence, $[0 \Leftrightarrow f\_implements(1,1,0,0,f_4)]$ is true. 
Thus, we have the answer true from QuBE. 

Now consider the question: does $C$ realize $f_1$? Clearly, $C$ covers $f_1$, but realizes $\{f_1, f_2\}$. Again, the logic formalism for the same is
 $f\_realizes(1,1,0,0,1,0,0)$, which is defined as 
$([1 \Leftrightarrow f\_implements(1,1,0,0,f_1)] \wedge [0 \Leftrightarrow f\_implements(1,1,0,0,f_2)] \wedge [0 \Leftrightarrow f\_implements(1,1,0,0,f_3)]
\wedge [0 \Leftrightarrow f\_implements(1,1,0,0,f_4)]$. 

As seen above, clearly, $[1 \Leftrightarrow f\_implements(1,1,0,0,f_1)]$ holds. However, 
we have  $f\_implements(1,1,0,0,f_2)$ is true since $prov(f_2)=\{\{c_2\}\}$. Then, we do not have $[0 \Leftrightarrow f\_implements(1,1,0,0,f_2)]$. Therefore, QuBE returns false. 

\item Is the given SPL complete? That is, for every $F \in \overline{\cal F}$, does there exist some $C \in \overline{\cal C}$ such that $Covers(C,F)$? 
Clearly, the answer is NO since there is no $C \in \overline{\cal C}$ covering $\{f_3\} \in \overline{\cal F}$. The formula for this is 
$\forall f'_1 f'_2 f'_3[C_F(f'_1, f'_2, f'_3) \Rightarrow \exists c'_1 c'_2 c'_3 c'_4[C_I(c'_1, \dots, c'_4) \wedge f\_ covers(c'_1, \dots, c'_4, f'_1, \dots, f'_3)]$. This expands out to 

$C_F(1,1,1) \Rightarrow \exists c'_1, c'_2 c'_3 c'_4[C_I(c'_1, \dots, c'_4) \wedge f\_ covers(c'_1, \dots, c'_4, 1,1,1)]$ and 

$C_F(1,1,0) \Rightarrow \exists c'_1, c'_2 c'_3 c'_4[C_I(c'_1, \dots, c'_4) \wedge f\_ covers(c'_1, \dots, c'_4, 1,1,0)]$ and 

$C_F(1,0,1) \Rightarrow \exists c'_1, c'_2 c'_3 c'_4[C_I(c'_1, \dots, c'_4) \wedge f\_ covers(c'_1, \dots, c'_4, 1,0,1)]$ and 

$C_F(0,1,1) \Rightarrow \exists c'_1, c'_2 c'_3 c'_4[C_I(c'_1, \dots, c'_4) \wedge f\_ covers(c'_1, \dots, c'_4, 0,1,1)]$ and 

$C_F(0,0,1) \Rightarrow \exists c'_1, c'_2 c'_3 c'_4[C_I(c'_1, \dots, c'_4) \wedge f\_ covers(c'_1, \dots, c'_4, 0,0,1)]$ and 

$C_F(0,1,0) \Rightarrow \exists c'_1, c'_2 c'_3 c'_4[C_I(c'_1, \dots, c'_4) \wedge f\_ covers(c'_1, \dots, c'_4, 0,1,0)]$ and

$C_F(1,0,0) \Rightarrow \exists c'_1, c'_2 c'_3 c'_4[C_I(c'_1, \dots, c'_4) \wedge f\_ covers(c'_1, \dots, c'_4, 1,0,0)]$ and 

$C_F(0,0,0) \Rightarrow \exists c'_1, c'_2 c'_3 c'_4[C_I(c'_1, \dots, c'_4) \wedge f\_ covers(c'_1, \dots, c'_4, 0,0,0)]$. 

Among these, $C_F(1,1,0)$, $C_F(0,0,1)$ evaluates to true. The rest evaluate to false - hence the formula involving them holds.

Now, consider $C_F(1,1,0)$. Then we must check 
whether $\exists c'_1 c'_2 c'_3 c'_4[C_I(c'_1, \dots, c'_4) \wedge f\_ covers(c'_1, \dots, c'_4, 1,1,0)]$ holds. The tuple $(1,1,0,0)$ as well as $(0,0,1,1)$ satisfy
$C_I(c'_1,c'_2,c'_3,c'_4)$. Hence, these are the only two tuples that we need to examine 
for $(c'_1,c'_2,c'_3,c'_4)$. Consider $(1,1,0,0)$. Then 
$[C_I(1,1,0,0) \wedge f\_ covers(1,1,0,0, 1,1,0)]$
evaluates to $true \wedge [1 \Rightarrow f\_implements(1,1,0,0,f_1)]
\wedge  [1 \Rightarrow f\_implements(1,1,0,0,f_2)]
\wedge$
\\
$[0 \Rightarrow f\_implements(1,1,0,0,f_3)]$. 
Clearly, this is true, as $\{c_1, c_2\}$ covers $\{f_1, f_2\}$. 

Now consider 
$C_F(0,0,1)$. Then we must check
$\exists c'_1 c'_2 c'_3 c'_4[C_I(c'_1, \dots, c'_4) \wedge f\_ covers(c'_1, \dots, c'_4, 0,0,1)]$ holds. Again, consider the two possibilities for 
$C_I(c'_1,c'_2,c'_3,c'_4)$. Look at 
$C_I(1,1,0,0)$ first. Then we have to check if 
$f\_ covers(1,1,0,0, 0,0,1)$ is true. This is
$[0 \Rightarrow f\_implements(1,1,0,0,f_1)]
\wedge  [0 \Rightarrow f\_implements(1,1,0,0,f_2)]
\wedge [1 \Rightarrow f\_implements(1,1,0,0,f_3)]$. 
Clearly, $f\_implements(1,1,0,0,f_3)$ does not hold since $prov(f_3)=\{c_1, c_4\}$ and 
$c_4$ can be assigned 0 in this formula. 
 Now consider the second assignment $(0,0,1,1)$. Then again, $C_I(0,0,1,1)$ holds. Now check if
$f\_ covers(0,0,1,1, 0,0,1)$ holds. That is, 
$[0 \Rightarrow f\_implements(0,0,1,1,f_1)]
\wedge  [0 \Rightarrow f\_implements(0,0,1,1,f_2)]
\wedge [1 \Rightarrow f\_implements(0,0,1,1,f_3)]$. 
Since $prov(f_3)=\{\{c_1,c_4\}\}$, $f\_implements(0,0,1,1,f_3)$ is false. Thus, this does not hold good as well. 

Therefore, for $\{f_3\}$ (equivalently, $C_F(0,0,1)$), there is no $C_I(c'_1,c'_2,c'_3,c'_4)$ which realizes $\{f_3\}$. Hence, QuBE returns false. Then, we can conclude that the SPL is not complete.
\end{enumerate}

\section{Results of Analyses on The ECPL Case-study}
\label{sec_results}

In this section, we analyze some properties of the ECPL example using QUBE. The platform $\overline{\cal C}$ contains 
the following architectures: 
\begin{enumerate}
 \item $C_1=\{$ Door Lock Manager, Unlock Driver Door, Unlock all doors, Lock all doors$\}$
\item $C_2=\{$Door lock manager, Unlock driver door, Unlock all doors, Lock all
doors, AutoLock, Speed$\}$
\item $C_3=\{$ Door lock manager, Unlock driver door, Unlock all doors, Lock all
doors, AutoLock, Gear in park $\}$
\item $C_4=\{$ Door lock manager, Unlock driver door, Unlock all doors, Lock all
doors, Power Lock, Courtesy switch, Key signal, silldoor signal,
Automatic$\}$
\item $C_5=\{$ Door lock manager, Unlock driver door, Unlock all doors, Lock all
doors, Power Lock, Courtesy switch, Key signal, silldoor signal,
Manual$\}$
\item $C_6=\{$ Door lock manager, Unlock driver door, Unlock all doors, Lock all
doors, AutoLock, Speed, Power Lock, Courtesy switch, Key signal,
silldoor signal, Automatic$\}$
\item $C_7=\{$ Door lock manager, Unlock driver door, Unlock all doors, Lock all
doors, AutoLock, Speed, Power Lock, Courtesy switch, Key signal,
silldoor signal, Manual$\}$
\item $C_8=\{$Door lock manager, Unlock driver door, Unlock all doors, Lock all
doors, AutoLock, Gear in park, Power Lock, Courtesy switch, Key
signal, silldoor signal, Automatic$\}$
\item $C_9=\{$ Door lock manager, Unlock driver door, Unlock all doors, Lock all
doors, AutoLock, Gear in park, Power Lock, Courtesy switch, Key
signal, silldoor signal, Manual$\}$
\end{enumerate}

Consider the following specifications in the scope $\overline{\cal F}$.
\begin{enumerate}
 \item $F_1=\{$Power Lock, f\_automatic$\}$ 
\item $F_2=\{$Power Lock, f\_automatic, Door Lock, Shift outof Park, Door relock$\}$.
\end{enumerate}

\begin{enumerate}
 \item Does $C_1$ realize $F_1$?
The formula to check is 
[1 $\Leftrightarrow f\_implements(1,1,1,1,0,\dots,0$,PowerLock)] $\wedge$ [1 $\Leftrightarrow f\_implements(1,1,1,1,0,\dots,0$, f\_automatic)] $\wedge \dots$ 
[0 $\Leftrightarrow f\_implements(1,1,1,1,0,\dots,0$,Door relock)] \\

Lets look at $f\_implements(1,1,1,1,0,\dots,0$,PowerLock). 
Let $c_1=Door Lock Manager$, $c_2=UnLock Driver Door$, $c_3=Unlock all doors$, $c_4=Lock all doors$, 
$c_5=PowerLock$. 
This is defined as 
$\forall c_1, \dots, c_n\{([1 \Rightarrow c_1] \wedge [1 \Rightarrow c_2] \wedge [1 \Rightarrow c_3]
\wedge [1 \Rightarrow c_4] \wedge [0 \Rightarrow c_5] \dots [0 \Rightarrow c_n]) \Rightarrow (c_1 \wedge c_5)\}$. 
Clearly, this does not hold (for $c_5=0$, the formula does not hold).

Hence, QUBE returns false. 

\item Is ECPL sound? If so, then for every $C_i \in \overline{\cal C}$, we can find a specification $F_i$ such that 
$Covers(C_i, F_i)$. The formulae for this is
\begin{tabbing} 
XXXXX\=\kill
$\forall c_1 \dots c_n[C_I(c_1, \dots, c_n)] \Rightarrow$ \\
\> $\exists f_1 \dots f_m[C_F(f_1, \dots, f_m) \wedge$\\
\> $f\_ covers(c_1, \dots, c_n, f_1, \dots, f_m)]$
\end{tabbing}

Consider the tuple $(1,1,1,1,0, \dots, 0)$ where the first four entries are 1, and the rest are zero.
This corresponds to $C_1$. Clearly, $C_I(1,1,1,1,0, \dots,0)$. Lets look at 
$f\_ covers(1,1,1,1,0, \dots,0, f'_1, \dots, f'_m)$. It is easy to see that 
$f\_implements(1,1,1,1,0,\dots,0,f)$ does not hold good for any $f$ since 
$c_1=Door Lock Manager$ does not provide any features alone, and $c_i, i >0$ 
do not provide any features. Thus,  the formula does not hold good, and QUBE returns false. 
Hence, the ECPL is not sound.

\item Is $F_1$ universally explicit? If so, then any $C_i \in \overline{\cal C}$ which covers $F_1$ must realize $F_1$; moreover, there must be atleast one $C \in \overline{\cal C}$ which covers it.
The formula for this is

$\exists c'_1 \dots c'_n[C_I(c'_1, \dots, c'_n) \wedge f\_realizes(c'_1, \dots, c'_n, f'_1, \dots, f'_m)] \wedge 
\forall c'_1 \dots c'_n\{[(C_I(c'_1, \dots, c'_n) \wedge f\_covers(c'_1, \dots, c'_n, f'_1, \dots, f'_m)]
 \Rightarrow f\_realizes(c'_1, \dots, c'_n, f'_1, \dots, f'_m)\}$. 

 Let us denote 
$c_1$=Door lock manager, $c_2$=AutoLock, $c_3$=Power Lock, $c_4$=Gear in Park and $c_5$=Automatic, $c_6$=Unlock driver door, $c_7$=Unlock all doors, $c_8$=Lock all doors, $c_9$=Courtsey switch, $c_{10}$=Key signal, 
$c_{11}$=sill door signal, $c_{12}$=Speed and $c_{13}$=Manual. 
Similarly, let $f_1$=Power Lock and $f_2$=f\_automatic. 
Consider the component tuple $(1,0,1,0,1,1,1,1,1,1,1,0,0)$. \\

Then we have 
$C_I(1,0,1,0,1,1,1,1,1,1,1,0,0)$.($C_4$ corresponds to this set) and 
$f\_realizes(1,0,1,0,1,1,1,1,1,1,1,0,0,1,1,0, \dots,0)$ ($C_4$ realizes $F_1$). Corresponding to this tuple, 
Consider the component tuple $(1,1,1,1,1,1,1,1,1,1,1,0,0)$. Clearly, 
$C_I((1,1,1,1,1,1,1,1,1,1,1,0,0)$ ($C_8$ corresponds to this). As $C_4 \subseteq C_8$, $C_8$ 
covers $F_1$. However, $f\_realizes(1,1,1,1,1,1,1,1,1,1,1,0,0,1,1,0, \dots,0)$ does not hold  
since :\\

Consider $0 \Leftrightarrow$\\
 $f\_implements((1,1,1,1,1,1,1,$
 $1,1,1,1,0,0, Shiftout of Park),$\\
 a conjunct in\\
 $f\_realizes(1,1,1,1,1,1,1,1,1,1,1,0,0,1,1,0, \dots,0)$. \\Now, 
it can be seen that $f\_implements((1,1,1,1,1,1,1,1,1,1,1,0,0$, Shiftout of Park) holds, since 
the component Gear in Park provides the feature Shift out of Park. Hence, 
this conjunct does not hold good. Hence,  
$f\_realizes(1,1,1,1,1,1,1,1,1,1,1,0,0,1,1,0, \dots,0)$ does not hold. \\

Hence, QUBE returns false. 
Thus, for the component tuple $(1,0,1,0,1,1,1,1,1,1,1,0,0)$ which realizes $F_1$,
 there exists a component tuple which covers, but does not realize $F_1$. Hence, $F_1$ is not universally explicit.

\end{enumerate}

\section{Conclusion}
\label{sec_con}
In this report, we have given a new definition for products in a Software Product Line, based on the notion of derivability of feature specifications from component architectures. The traceability relation between features and components plays a central role in this definition. We show that our definition is different from the consistency based definition of SPL products and captures the implementation relation in a more natural way. In the light of this, we define a set of analysis problems for the SPLs. We show that these problems can be formulated as Quantified Boolean Formulae and can be solved using QSAT tools such as QUBE.

We have demonstrated the feasibility of our approach through a small fragment of an industrial SPL. The scalability of the above approach for complete SPLs is yet to be studied. Since QSAT problem is PSPACE-complete, generic QSAT solvers may not scale well. However, one observes that the formulas for the analyses have very specific structure which can be exploited for efficient QSAT solving.

The proposed semantic model of the SPL treats specifications and architectures as sets of features and components respectively. When richer structure is imposed on these elements, it will affect the definition of traceability relation. Then the implementation relation has to be  refined to handle the resulting complexity.



\bibliographystyle{IEEEtran}
\bibliography{mainASE2010}

\begin{thebibliography}{10}
\providecommand{\url}[1]{#1}
\csname url@samestyle\endcsname
\providecommand{\newblock}{\relax}
\providecommand{\bibinfo}[2]{#2}
\providecommand{\BIBentrySTDinterwordspacing}{\spaceskip=0pt\relax}
\providecommand{\BIBentryALTinterwordstretchfactor}{4}
\providecommand{\BIBentryALTinterwordspacing}{\spaceskip=\fontdimen2\font plus
\BIBentryALTinterwordstretchfactor\fontdimen3\font minus
  \fontdimen4\font\relax}
\providecommand{\BIBforeignlanguage}[2]{{%
\expandafter\ifx\csname l@#1\endcsname\relax
\typeout{** WARNING: IEEEtran.bst: No hyphenation pattern has been}%
\typeout{** loaded for the language `#1'. Using the pattern for}%
\typeout{** the default language instead.}%
\else
\language=\csname l@#1\endcsname
\fi
#2}}
\providecommand{\BIBdecl}{\relax}
\BIBdecl

\bibitem{benavides10-is}
\BIBentryALTinterwordspacing
D.~Benavides, S.~Segura, and A.~Ruiz-Cortés, ``Automated analysis of feature
  models 20 years later: a literature review,'' \emph{Information Systems},
  vol.~35, no.~6, pp. 615--636, 2010. [Online]. Available:
  \url{http://dx.doi.org/10.1016/j.is.2010.01.001}
\BIBentrySTDinterwordspacing

\bibitem{Beuche2004}
\BIBentryALTinterwordspacing
D.~Beuche, H.~Papajewski, and W.~Schröder-Preikschat, ``Variability management
  with feature models,'' \emph{Science of Computer Programming}, vol.~53,
  no.~3, pp. 333 -- 352, 2004, software Variability Management. [Online].
  Available:
  \url{http://www.sciencedirect.com/science/article/B6V17-4D04WMN-2/2/beffa7197aee601f96370977e9f25fa4}
\BIBentrySTDinterwordspacing

\bibitem{Berg2005}
K.~Berg, J.~Bishop, and D.~Muthig, ``Tracing software product line variability:
  from problem to solution space,'' in \emph{SAICSIT '05: Proceedings of the
  2005 annual research conference of the South African institute of computer
  scientists and information technologists on IT research in developing
  countries}.\hskip 1em plus 0.5em minus 0.4em\relax , Republic of South
  Africa: South African Institute for Computer Scientists and Information
  Technologists, 2005, pp. 182--191.

\bibitem{Anquetil2008}
N.~Anquetil, B.~Grammel, I.~G.~L. da~Silva, J.~A.~R. Noppen, S.~S. Khan,
  H.~Arboleda, A.~Rashid, and A.~Garcia, ``Traceability for model driven,
  software product line engineering,'' in \emph{ECMDA Traceability Workshop
  Proceedings, Berlin, Germany}.\hskip 1em plus 0.5em minus 0.4em\relax Norway:
  SINTEF, June 2008, pp. 77--86.

\bibitem{302409}
J.-M. DeBaud and K.~Schmid, ``A systematic approach to derive the scope of
  software product lines,'' in \emph{ICSE '99: Proceedings of the 21st
  international conference on Software engineering}.\hskip 1em plus 0.5em minus
  0.4em\relax New York, NY, USA: ACM, 1999, pp. 34--43.

\bibitem{Eisenbarth2002}
T.~Eisenbarth, R.~Koschke, and D.~Simon, ``A formal method for the analysis of
  product maps,'' in \emph{Requirements Engineering for Product Lines Workshop,
  Essen, Germany}, 2002.

\bibitem{Zhu2006}
C.~Zhu, Y.~Lee, W.~Zhao, and J.~Zhang, ``A feature oriented approach to mapping
  from domain requirements to product line architecture,'' in \emph{Software
  Engineering Research and Practice}, H.~R. Arabnia and H.~Reza, Eds.\hskip 1em
  plus 0.5em minus 0.4em\relax CSREA Press, 2006, pp. 219--225.

\bibitem{10.1109/ICCSA.2007.59}
T.~K. Satyananda, D.~Lee, S.~Kang, and S.~I. Hashmi, ``Identifying traceability
  between feature model and software architecture in software product line
  using formal concept analysis,'' \emph{Computational Science and its
  Applications, International Conference}, vol.~0, pp. 380--388, 2007.

\bibitem{Metzger2007}
\BIBentryALTinterwordspacing
A.~Metzger, K.~Pohl, P.~Heymans, P.-Y. Schobbens, and G.~Saval,
  ``Disambiguating the documentation of variability in software product lines:
  A separation of concerns, formalization and automated analysis,'' in
  \emph{Requirements Engineering Conference, 2007. RE '07. 15th IEEE
  International}, 2007, pp. 243--253. [Online]. Available:
  \url{http://ieeexplore.ieee.org/xpls/abs\_all.jsp?arnumber=4384187}
\BIBentrySTDinterwordspacing

\bibitem{1095605}
K.~Pohl, G.~B\"{o}ckle, and F.~J. v.~d. Linden, \emph{Software Product Line
  Engineering: Foundations, Principles and Techniques}.\hskip 1em plus 0.5em
  minus 0.4em\relax Secaucus, NJ, USA: Springer-Verlag New York, Inc., 2005.

\bibitem{qsat-solver}
E.~Giunchiglia, M.~Narizzano, and A.~Tacchella, ``Qube: A system for deciding
  quantified boolean formulas satisfiability,'' in \emph{IJCAR}, 2001, pp.
  364--369.

\bibitem{CzarneckiHE05}
K.~Czarnecki, S.~Helsen, and U.~W. Eisenecker, ``Formalizing cardinality-based
  feature models and their specialization,'' \emph{Software Process:
  Improvement and Practice}, vol.~10, no.~1, pp. 7--29, 2005.

\bibitem{Hoek2000}
A.~V.~D. Hoek, ``Capturing product line architectures,'' in \emph{In
  Proceedings of the 4th International Software Architecture Workshop}, no.
  CU-CS-895-99.\hskip 1em plus 0.5em minus 0.4em\relax Press, 2000, pp.
  2000--95.

\bibitem{Batory2005}
D.~S. Batory, ``Feature models, grammars, and propositional formulas,'' in
  \emph{SPLC}, ser. Lecture Notes in Computer Science, J.~H. Obbink and
  K.~Pohl, Eds., vol. 3714.\hskip 1em plus 0.5em minus 0.4em\relax Springer,
  2005, pp. 7--20.

\bibitem{bddsolve}
BDDSolve, ``http://www.win.tue.nl/~wieger/bddsolve/,'' 2010.

\end{thebibliography}


\end{document}